\documentclass[a4paper,twocolumn,11pt,unpublished]{quantumarticle}
\pdfoutput=1
\usepackage[utf8]{inputenc}
\usepackage[T1]{fontenc}
\usepackage[english]{babel}
\usepackage{amsmath,amssymb,amsthm,bm}
\usepackage{graphicx,booktabs,enumitem,microtype,xcolor}
\usepackage{hyperref}
\hypersetup{colorlinks=true,allcolors=quantumviolet,
 pdftitle={Record consistency under repeated readout and apparatus composition},
 pdfauthor={D.M. Theshan N. Weerasinghe}}
\makeatletter
\let\quantum@original@printtitle\@printtitle
\renewcommand*\@printtitle{{\color{quantumviolet}\quantum@original@printtitle}}
\makeatother
\newtheorem{theorem}{Theorem}[section]
\newtheorem{lemma}[theorem]{Lemma}
\newtheorem{proposition}[theorem]{Proposition}
\newtheorem{corollary}[theorem]{Corollary}
\theoremstyle{definition}

\newtheorem{postulate}[theorem]{Assumption}
\theoremstyle{remark}

\newcommand{\TV}{\operatorname{TV}}
\newcommand{\sech}{\operatorname{sech}}
\newcommand{\Tr}{\operatorname{Tr}}
\newcommand{\E}{\mathbb{E}}

\newcommand{\R}{\mathbb{R}}
\newcommand{\dd}{\,\mathrm{d}}
\AtBeginDocument{\normalfont}
\begin{document}
\title{Record consistency under repeated readout and apparatus composition}
\author{D.M. Theshan N. Weerasinghe}
\email[\newline]{theshan.dissanayakemudiyanselage1@monash.edu}
\affiliation{Department of Electrical and Computer Systems Engineering, Monash University, Clayton, Australia}
\date{}
\maketitle
\begin{abstract}
No-signalling constraints on isolated outcome probabilities do not determine whether an apparatus preserves the joint law of its accessible records under repeated readout, network composition, or reuse. We study these extensions through a conditional-record criterion and four explicit models. For a wrapped-Cauchy retrocausal source, protection under two conditionally independent continuous readouts forces an angle-independent response; three readouts suffice without continuity. A finite angular-bandwidth assumption gives a quantitative leakage bound, whereas continuity alone gives no uniform stability. At a specified two-source Bell junction, the required correction matrix has full nonnegative rank on every even uniform angular grid. A Gaussian controller can cancel branch-mass dependence while leaving a distinguishable joint record, for which we derive exact noise and law-change tradeoffs. Finally, a fixed singlet with shared classical noise reproduces the variable-width isolated outcome family, but retained and refreshed controllers generate different temporal statistics. We prove a finite-resource obstruction and construct positive finite-controller approximations. Numerical studies comprising 8.3 million independent simulated histories support the model-specific predictions. Together, the results identify conditions that must accompany marginal no-signalling when extending an apparatus to richer records and interventions.
\end{abstract}

\section{Introduction}\label{sec:introduction}
An outcome distribution can satisfy no-signalling even when its underlying history law depends on a later measurement setting. Retrocausal models exploit this distinction by imposing boundary conditions on histories rather than requiring all hidden variables to be generated in temporal order. In the spin models studied by Almada \emph{et al.}~\cite{almada}, symmetry removes observable setting dependence from isolated outcome marginals. The relation between such symmetries and the causal explanation of no-signalling is central to assessing the explanatory content of a model~\cite{wood}. An additional question arises once a detector can be read repeatedly, a second source is connected, or a compensating device leaves its own observable record: which features of the isolated construction survive these extensions?

The relevant object is then a joint record distribution. Correcting the probability of a detector outcome need not correct the conditional distribution of its associated pointer, memory, or auxiliary controller. Likewise, specifying a normalized distribution for every fixed setting does not by itself specify how raw history weights are to be connected through adaptive control. Network scenarios make the distinction between independent preparation and conditional dependence explicit~\cite{branciard,price}, while quantum-network and process descriptions provide established frameworks for treating multitime operations and memory~\cite{chiribella,pollock,taranto}.

This article develops an apparatus-level analysis of these issues, starting from the wrapped-Cauchy source and detector-record conditions of Refs.~\cite{weerasinghe1,weerasinghe2}. All definitions needed below are given explicitly. The organizing criterion separates correction of branch masses from invariance of the conditional law of additional records. Four models then isolate distinct extension requirements: passive access to a fixed hidden angle, factorization at a two-source junction, accessible degrees of freedom of a compensating controller, and controller reset in a variable-width operational construction.

\subsection{Results and assumptions}
For repeated passive readout, the hidden angle is held fixed and detector draws are independent conditional on that angle. Under these premises, two continuous readouts or three merely measurable readouts can be protected for every future axis only if the detector response is angle independent. A bound on the angular bandwidth of its antipodal contrast yields a nonzero lower bound on two-record leakage. A smooth counterfamily establishes why continuity alone does not yield such a uniform quantitative conclusion.

For network composition, we specify a positive Bell-junction kernel and compare its branch masses with a quantum benchmark. Two independent scalar corrections cannot realize the benchmark, and the correction matrix has rank and nonnegative rank equal to the size of every even uniform angular grid. The result concerns this kernel and this product-sum correction representation; it does not impose the same resource bound on arbitrary network models.

For readable compensation, we consider a two-dimensional Gaussian boundary ensemble. An inverse partition factor can protect both the detector-bit marginal and the controller-coordinate marginal while their joint law remains setting dependent. We derive its exact total variation, a finite-noise tradeoff, a sharp unrestricted conditional-law repair bound, and the compensation parameters that protect the full record.

For variable widths, a fixed singlet and a shared hyperbolic-secant classical controller reproduce the isolated outcome family. Fresh local controllers yield a specified measure-and-prepare continuation instrument, whereas retention changes temporal statistics. A Cauchy-matrix argument excludes a fixed finite \emph{total} shared resource for the exact continuum family under width-local access; a positive discrete construction gives uniform finite-error approximations.

These models have distinct history laws and preparation assumptions. The first retains the source's conditional hidden-angle law, the junction introduces an additional raw interaction, the Gaussian ensemble postulates a source--controller coupling, and the variable-width model is an operational quantum construction. Their results therefore give conditional requirements for apparatus extension, rather than a single microscopic realization. The mathematical statements are supported by deterministic cross-checks and four computational experiments; the sampling methods and uncertainty conventions are collected in Appendix~\ref{app:methods}.

\section{Record consistency}\label{sec:framework}
\subsection{Source and apparatus specification}
For finite positive width $\gamma$, define the wrapped-Cauchy kernel
\begin{equation}\label{eq:wrapped}
 p_\gamma(t)=\frac{\sinh\gamma}{2\pi(\cosh\gamma-\cos t)},
 \quad t\in\R/(2\pi\mathbb Z).
\end{equation}
For outcomes $a,b\in\{-1,+1\}$, let $\theta_{x,a}=x+\pi\mathbf1_{a=-1}$ and set
\begin{equation}\label{eq:sourcefactors}
\begin{aligned}
 A_a(\lambda)&=p_{\gamma_A}(\theta_{x,a}-\lambda),\\
 B_{y,b}(\lambda)&=p_{\gamma_B}(\theta_{y,b}-\lambda-\pi),\\
 R_A(\lambda)&=\sum_a A_a(\lambda),\qquad
 R_B(\lambda;y)=\sum_b B_{y,b}(\lambda).
\end{aligned}
\end{equation}
The normalized isolated history law is
\begin{equation}\label{eq:sourcehistory}
\begin{aligned}
 P_0(\dd\lambda,a,b\mid x,y)
 &=\frac{A_a(\lambda)B_{y,b}(\lambda)}{Z_{xy}}\dd\lambda,\\
 Z_{xy}&=\int_0^{2\pi} R_A(\lambda)R_B(\lambda;y)\dd\lambda.
\end{aligned}
\end{equation}
Integrating over $\lambda$ yields
\begin{equation}\label{eq:isolatedlaw}
 P_0(a,b\mid x,y)=\frac{1-ab\sech(\gamma_A+\gamma_B)\cos(x-y)}4.
\end{equation}
Matching Eq.~\eqref{eq:isolatedlaw} fixes outcome probabilities but does not identify the underlying law in Eq.~\eqref{eq:sourcehistory}.

The apparatus specification consists of a raw history measure, a boundary preparation, permitted readout kernels, retained memory, and intervention rules. The following assumptions fix how these objects are compared.
\begin{postulate}[Accessible records]
The earlier record is the joint tuple of outputs permitted by the apparatus before the intervention being compared. A future-dependent hidden variable is included only when an admissible readout makes it accessible.
\end{postulate}
\begin{postulate}[Fixed preparation and interventions]
The preparation and earlier apparatus are held fixed across compared interventions. An adaptive intervention is a specified measurable function of an available record. All outcomes, including any failure outcomes, are retained.
\end{postulate}
\begin{postulate}[Passive access]
A passive readout leaves the stipulated hidden history law unchanged. Repeated independent passive readout uses fresh detector randomness conditional on the same unresolved hidden state. Copying a recorded value and preparing a disturbed state are different operations.
\end{postulate}
\begin{postulate}[Resource accounting]
Preparation independence is distinguished from dependence after conditioning. Any finite total resource bound includes shared classical registers and environmental memory. Discarding a register from the description does not reset it.
\end{postulate}

We call a record \emph{protected} when its distribution is identical for all compared future interventions. We use
\begin{equation}
 \TV(P,Q)=\sup_E|P(E)-Q(E)|
\end{equation}
for total variation; for densities it equals one half of their integrated absolute difference.

\subsection{Conditional laws and adaptive normalization}
Let $h$ be a finite earlier branch label and $u$ a later intervention. Disintegrate the unnormalized history measure as
\begin{equation}
 R_u(h,\dd z)=r_u(h)Q_u(\dd z\mid h),
 \quad 0<r_u(h)<\infty,
\end{equation}
where $Q_u$ is normalized. A scalar correction multiplies each branch by $c_u(h)>0$. For a permitted recorded output with kernel $L_u(\dd e\mid h,z)$, define
\begin{equation}
 \widehat Q_u(\dd e\mid h)
 =\int Q_u(\dd z\mid h)L_u(\dd e\mid h,z).
\end{equation}
Dependence on $u$ denotes a context-dependent history law; it does not assign an ordinary earlier device a temporally forward input from the future.

\begin{theorem}[Conditional-law criterion]\label{thm:conditional}
Suppose the corrected branch distribution is a fixed $q(h)$ for every $u$. The joint record $(h,e)$ is protected if and only if $\widehat Q_u(\cdot\mid h)$ is independent of $u$ for every $h$ with $q(h)>0$. Moreover,
\begin{equation}\label{eq:jointtv}
\begin{aligned}
 &\TV(P_u^{H,E},P_v^{H,E})\\
 &\quad=\sum_h q(h)\TV\!\left(\widehat Q_u(\cdot\mid h),\widehat Q_v(\cdot\mid h)\right).
\end{aligned}
\end{equation}
\end{theorem}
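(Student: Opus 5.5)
The plan is to write the corrected joint law of $(h,e)$ explicitly and then use the fact that total variation decomposes over a finite first coordinate. After a scalar correction $c_u(h)$ and normalization, the corrected history measure on the branch $h$ is proportional to $c_u(h)r_u(h)Q_u(\dd z\mid h)$. The hypothesis states that the normalized branch masses equal a fixed $q(h)$, so
\begin{equation*}
 P_u^{H,Z}(h,\dd z)=q(h)\,Q_u(\dd z\mid h).
\end{equation*}
We push this forward through the readout kernel $L_u$ and use Fubini to write the record law as
\begin{equation*}
 P_u^{H,E}(h,\dd e)=q(h)\,\widehat Q_u(\dd e\mid h).
\end{equation*}
Here we need $\widehat Q_u(\cdot\mid h)$ to be a probability measure. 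This holds because $Q_u$ is normalized and $L_u$ is a Markov kernel. The measurability of $h\mapsto$ (a finite label) is automatic.

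For the identity~\eqref{eq:jointtv}, we use the fact that $h$ ranges over a finite set. Any event $E$ in the product space therefore decomposes into sections $E_h$, and
\begin{equation*}
 P_u^{H,E}(E)-P_v^{H,E}(E)=\sum_h q(h)\bigl[\widehat Q_u(E_h\mid h)-\widehat Q_v(E_h\mid h)\bigr].
\end{equation*}
The upper bound follows by bounding each bracket in absolute value by the conditional total variation. For the matching lower bound, for each $h$ with $q(h)>0$ we take a Hahn set $S_h$ for the signed measure $\widehat Q_u(\cdot\mid h)-\widehat Q_v(\cdot\mid h)$. This set attains the supremum defining the total variation exactly, since the total variation of two probability measures equals the positive part of their difference. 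We then set $E=\bigcup_h\{h\}\times S_h$; this event is measurable because the union is finite. Since all brackets then have the same sign, the supremum over $E$ achieves the sum. Branches with $q(h)=0$ contribute nothing. The one point needing care is this sign alignment: without choosing $S_h$ as the positive set of the difference on each branch, the terms could cancel. Alternatively, one can use the density formula with respect to the dominating measure $\widehat Q_u+\widehat Q_v$ on each branch, so that the total variation is half the $L^1$ norm and additivity over the disjoint branches is immediate.

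The equivalence then follows from~\eqref{eq:jointtv}. The joint record is protected if and only if $\TV(P_u^{H,E},P_v^{H,E})=0$ for all pairs $u,v$. Since the weights $q(h)$ are positive on the relevant branches and each total variation term is nonnegative, this holds if and only if every term vanishes. That is exactly the statement that $\widehat Q_u(\cdot\mid h)=\widehat Q_v(\cdot\mid h)$ whenever $q(h)>0$. No single step is a real obstacle. The main care points are to note explicitly that the disintegration is well defined, which holds because $0<r_u(h)<\infty$, and that conditionals on null branches are irrelevant.
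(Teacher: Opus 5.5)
Your proposal is correct and takes the same route as the paper: you write the corrected record law as $q(h)\widehat Q_u(\dd e\mid h)$, use additivity of total variation over the disjoint branch components to obtain Eq.~\eqref{eq:jointtv}, and read off the equivalence from the nonnegativity of the weighted terms. Your Hahn-set argument for the matching lower bound spells out the step that the paper compresses into ``total variation adds over $h$.''
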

\begin{proof}
The corrected record law is $P_u(h,\dd e)=q(h)\widehat Q_u(\dd e\mid h)$. Distinct values of $h$ define disjoint measurable components, so total variation adds over $h$, giving Eq.~\eqref{eq:jointtv}. The nonnegative terms vanish precisely under the stated condition.
\end{proof}
This disintegration identity separates branch-mass correction from protection of additional records. The following normalization identity is useful for defining comparison laws but imposes no locality requirement.
\begin{lemma}[Branchwise normalization]\label{lem:normalization}
Let $n_h$ be target probabilities and $R_h$ raw branch measures with finite masses $d_h>0$ on their support. Then $P(h,\dd z)=n_hR_h(\dd z)/d_h$ is normalized and has marginal $n_h$.
\end{lemma}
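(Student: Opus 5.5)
The argument is a direct computation, and I would carry it out in two steps: first the total mass of $P$, then its marginal on the branch label. Throughout, $h$ ranges over a finite (or at most countable) label set. For each $h$ with $n_h>0$, the raw measure $R_h$ has finite positive mass $d_h=R_h(\text{all }z)$ on its support. Branches with $n_h=0$ receive zero weight regardless of $R_h$, so I would restrict attention to the support of $n$, where the division by $d_h$ is well defined.

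\emph{Marginal.} For a fixed branch $h$, I integrate $P(h,\dd z)$ over all $z$. This gives $n_h R_h(\text{all }z)/d_h=n_h d_h/d_h=n_h$. Since $h$ is a discrete coordinate, the marginal of $P$ on $h$ is exactly this branchwise integral, so the marginal is $n_h$.

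\emph{Normalization.} Summing the marginal over $h$ gives $\sum_h n_h=1$, because the $n_h$ are target probabilities. Nonnegativity of $P$ is immediate from $n_h\ge 0$, $d_h>0$ and $R_h\ge0$. Hence $P$ is a probability measure on the product of the label set with the $z$-space. It is the disjoint union of the components $\{h\}\times(\cdot)$, the same decomposition used in the proof of Theorem~\ref{thm:conditional}.

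The only point that needs care is measurability and countable additivity when the label set is infinite. In that case I would define $P(E)=\sum_h n_h R_h(E_h)/d_h$ for measurable $E$ with sections $E_h$, and invoke monotone convergence. Since $h$ is finite here, there is no real obstacle. I would close by remarking that the conditional law of $P$ given $h$ is $R_h/d_h$. Hence no locality or factorization property is imposed, consistent with the comment preceding the lemma.
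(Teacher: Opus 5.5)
Your proposal is correct and is the paper's proof: integrate each branch to get $n_h d_h/d_h=n_h$, then sum over $h$ to get one. Your remarks on restricting to the support of $n$ and on countable additivity for an infinite label set are harmless refinements that the paper leaves implicit.
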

\begin{proof}
Integrating each branch gives $n_h$; summing over $h$ gives one.
\end{proof}
Whether the factor $n_h/d_h$ can be generated by an allowed interaction is a separate question, examined in Secs.~\ref{sec:network} and~\ref{sec:controller}.

\begin{lemma}[Adaptive common mass]\label{lem:adaptive}
Let $k$ be a finite earlier record with target law $q(k)>0$. Suppose the integrated corrected raw mass at fixed $u$ is $D_u(k)=z_uq(k)$, where $z_u>0$. Substituting $u=f(k)$ in the raw law and then normalizing globally gives
\begin{equation}\label{eq:adaptivecommon}
 q_f(k)=\frac{z_{f(k)}q(k)}{\sum_jz_{f(j)}q(j)}.
\end{equation}
Equality of all $z_u$ suffices for every policy to preserve $q$. It is also necessary when at least two records have positive probability and controls can be chosen freely on those records.
\end{lemma}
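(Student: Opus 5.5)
The proof is a direct computation in three steps: derive Eq.~\eqref{eq:adaptivecommon}, read off sufficiency, then get necessity from a two-record policy.

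\textbf{Deriving the formula.} Under a policy $f$, the control applied on record $k$ is $u=f(k)$. The raw history measure restricted to branch $k$ is therefore the fixed-$u$ raw measure at $u=f(k)$, restricted to $k$. By hypothesis its integrated corrected mass is $D_{f(k)}(k)=z_{f(k)}q(k)$. This step is legitimate because the control is a measurable function of the already-available record (Assumption on fixed preparation and interventions), so substituting $u=f(k)$ branch by branch is well defined. Global normalization divides by the total mass $\sum_j z_{f(j)}q(j)$, which is finite and positive since $z_u>0$ and $q>0$. The $k$-marginal is then exactly Eq.~\eqref{eq:adaptivecommon}. All outcomes are retained, so no branch is dropped from the normalizing sum.

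\textbf{Sufficiency.} If $z_u=z$ for all $u$, the numerator of Eq.~\eqref{eq:adaptivecommon} is $z\,q(k)$ and the denominator is $z\sum_j q(j)=z$. Hence $q_f=q$ for every policy $f$.

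\textbf{Necessity.} Suppose some $z_u\neq z_v$, and pick two records $k_1\neq k_2$ with positive probability. Free choice of controls on these records allows the policy $f(k_1)=u$ and $f(j)=v$ for every $j\neq k_1$. Then
\[
q_f(k_1)=\frac{z_u q(k_1)}{z_u q(k_1)+z_v\bigl(1-q(k_1)\bigr)}.
\]
This equals $q(k_1)$ if and only if $z_u q(k_1)+z_v\bigl(1-q(k_1)\bigr)=z_u$, that is, $(z_v-z_u)\bigl(1-q(k_1)\bigr)=0$. Because $q(k_2)>0$, we have $1-q(k_1)\ge q(k_2)>0$, so the condition forces $z_u=z_v$, a contradiction. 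Hence all $z_u$ must be equal.

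The one point needing care is the meaning of ``controls can be chosen freely'': we need a policy that assigns $u$ to one positive-probability record and $v$ to the rest. That requires at least two such records, which is exactly why the hypothesis is stated. With a single record the denominator cancels trivially and necessity fails.
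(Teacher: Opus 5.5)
Your proof is correct and follows the paper's route: branchwise substitution $u=f(k)$ with global normalization, cancellation of a common $z$, and a two-record policy for necessity. The one difference is in necessity. The paper compares the ratio $q_f(k)/q_f(j)=z_{f(k)}q(k)/[z_{f(j)}q(j)]$, in which the global denominator cancels, so it needs free controls only on the two chosen records. Your single-record computation instead also sets $f(j)=v$ on every remaining record, and so relies on the stronger reading of ``chosen freely'' that you flagged yourself.
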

\begin{proof}
Summing over unrecorded outputs after substitution gives Eq.~\eqref{eq:adaptivecommon}. Equal $z_u$ cancel. Conversely, assign two allowed controls to two positive-probability records and compare their probability ratio with $q(k)/q(j)$. Preservation requires the corresponding masses to agree.
\end{proof}
Fixed-context no-signalling and closure under raw adaptive substitution are therefore distinct conditions. A construction using normalized conditional transitions specifies a different composition rule and must be assessed with that rule held fixed.

\section{Repeated passive readout}\label{sec:repeated}

Protection of a single detector output need not extend to the joint record of repeated use. We examine this question for a binary detector coupled to one unresolved source angle, while retaining the source law and leaving that angle unchanged. The distinction between another independent readout and a copy of an existing record is essential.

Fix an earlier axis and outcome $(x,a)$, with $a\in\{+1,-1\}$, and positive finite widths $\gamma_A,\gamma_B$. On the circle $\mathbb T=\mathbb R/(2\pi\mathbb Z)$, define
\begin{equation}
 p_\gamma(t)=\frac{\sinh\gamma}
 {2\pi(\cosh\gamma-\cos t)}.
\end{equation}
Writing $\theta_{x,a}=x+\pi\mathbf 1_{a=-1}$, the source factors are
\begin{equation}
\begin{aligned}
 A(\lambda)&=p_{\gamma_A}(\theta_{x,a}-\lambda),\\
 R_B(\lambda;y)&=p_{\gamma_B}(y-\lambda)\\
              &\quad+p_{\gamma_B}(y-\lambda-\pi).
\end{aligned}
\end{equation}
The conditional source measure is
\begin{equation}\label{eq:repeated-source}
\begin{aligned}
 \mu_y(\dd\lambda)
   &=\frac{A(\lambda)R_B(\lambda;y)}{D(y)}\dd\lambda,\\
 D(y)&=\int_0^{2\pi} A(\lambda)R_B(\lambda;y)\dd\lambda.
\end{aligned}
\end{equation}
Here the future axis ranges over all $y\in[0,\pi)$. Appending the detector preserves Eq.~\eqref{eq:repeated-source}. Its response is $P(R=+\mid\lambda)=f(\lambda)$, with $0\le f\le1$ and no explicit dependence on $y$. Repeated readouts have independent detector randomness conditional on the same unchanged $\lambda$. Statistical conditioning on an observed bit is allowed; a physical update or reset of the hidden angle is not part of this model.

\subsection{Exact protection and finite error}

Set $A_\pi(\lambda)=A(\lambda+\pi)$ and $p=A/(A+A_\pi)$. The relevant symmetry acts on the weighted antipodal pair, rather than on either response separately.

\begin{lemma}[Antipodal characterization]\label{lem:antipodal}
The mean of an integrable response $g(\lambda)$ is the constant $m$ for every future axis if and only if, almost everywhere,
\begin{equation}\label{eq:repeated-antipodal}
 p(\lambda)g(\lambda)
 +[1-p(\lambda)]g(\lambda+\pi)=m.
\end{equation}
\end{lemma}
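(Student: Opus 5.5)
The plan is to fold the circle onto a half-period using the $\pi$-periodicity of the future factor. The protection condition then becomes the vanishing of a convolution on $\mathbb R/(\pi\mathbb Z)$, and Fourier injectivity finishes the argument.

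\emph{Folding.} First I would observe that $R_B(\lambda;y)=p_{\gamma_B}(y-\lambda)+p_{\gamma_B}(y-\lambda-\pi)$ is $\pi$-periodic in $\lambda$. Write $R_B(\lambda;y)=s(y-\lambda)$ with $s(t)=p_{\gamma_B}(t)+p_{\gamma_B}(t+\pi)$.

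Split $\int_0^{2\pi}$ into $[0,\pi)$ and $[\pi,2\pi)$, and substitute $\lambda\mapsto\lambda+\pi$ in the second piece. This gives
\[
\int_0^{2\pi} g A R_B\dd\lambda=\int_0^{\pi}\bigl[g(\lambda)A(\lambda)+g(\lambda+\pi)A_\pi(\lambda)\bigr]s(y-\lambda)\dd\lambda .
\]
The bracket equals $(A+A_\pi)(\lambda)\,h(\lambda)$, where $h=pg+(1-p)g(\cdot+\pi)$. Taking $g\equiv1$ gives $D(y)=\int_0^\pi (A+A_\pi)s(y-\lambda)\dd\lambda$, and this is strictly positive.

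Since $A$ is bounded and $g$ is integrable, the function $\varphi=(A+A_\pi)(h-m)$ lies in $L^1(\mathbb R/\pi\mathbb Z)$. The statement ``$\E_{\mu_y}[g]=m$ for all $y\in[0,\pi)$'' is then equivalent to
\[
(\varphi * s)(y)=\int_0^\pi \varphi(\lambda)s(y-\lambda)\dd\lambda=0\quad\text{for all }y .
\]

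\emph{Sufficiency.} If $h=m$ a.e., then $\varphi=0$ and the identity is immediate.

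\emph{Necessity.} This is the only substantive step. I would compute the Fourier coefficients of $s$ on the period-$\pi$ circle. The wrapped-Cauchy kernel has coefficients $\widehat{p_\gamma}(k)=e^{-\gamma|k|}$ (in the probability normalization). Adding the $\pi$-shift kills the odd modes and doubles the even ones. Hence in the frequency $2n$ basis, $\widehat s(n)\propto 2e^{-2\gamma_B|n|}\neq0$ for every $n$.

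Since $\varphi*s\equiv0$, the convolution theorem gives $\widehat\varphi(n)\widehat s(n)=0$, so $\widehat\varphi(n)=0$ for all $n$. Uniqueness of Fourier coefficients for $L^1$ functions then yields $\varphi=0$ a.e. Because $A+A_\pi>0$ everywhere (since $p_{\gamma_A}>0$ for finite width), we conclude $h=m$ a.e., which is Eq.~\eqref{eq:repeated-antipodal}.

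\emph{Main obstacle.} The step I expect to need care is the nonvanishing of every coefficient of $s$. If some coefficient vanished, a nonzero $\varphi$ could hide in that mode. The explicit geometric decay $e^{-\gamma_B|k|}$ of the wrapped-Cauchy coefficients rules this out for finite $\gamma_B$.

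A minor point to check is that $y$ ranging over $[0,\pi)$ covers a full period of the folded circle. It does, because $s$ is $\pi$-periodic, so no information is lost by restricting $y$.
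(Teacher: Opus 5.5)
Your proposal is correct and follows essentially the same route as the paper. Both arguments use the wrapped-Cauchy Fourier expansion to show that the translates of $R_B$ have nonzero coefficients at every even harmonic, deduce that the even coefficients of $A(g-m)$ vanish, and finish by Fourier uniqueness and division by $A+A_\pi>0$; those even coefficients are exactly the coefficients of your folded function $\varphi=(A+A_\pi)(h-m)$ on $\mathbb R/\pi\mathbb Z$, and the only difference is that you fold onto the half-period first while the paper passes to the $\pi$-periodization afterwards.
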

\begin{proof}
The condition is
\begin{equation}
 \int_0^{2\pi} A(\lambda)[g(\lambda)-m]
 R_B(\lambda;y)\dd\lambda=0
\end{equation}
for every $y$. The Fourier expansion
\begin{equation}
 p_\gamma(t)=\frac1{2\pi}
 \sum_{k\in\mathbb Z}e^{-|k|\gamma}e^{ikt}
\end{equation}
shows that the translates of $R_B$ have nonzero coefficients at every even harmonic. Every even coefficient of $A(g-m)$ therefore vanishes. Uniqueness of Fourier coefficients gives its zero $\pi$-periodization. Division by $A+A_\pi>0$ yields Eq.~\eqref{eq:repeated-antipodal}; antipodal pairing proves the converse. This is the source-specific characterization of Ref.~\cite{weerasinghe2}, included here for the repeated-readout analysis.
\end{proof}

For conditionally independent readouts, the probability of $j$ consecutive $+$ outcomes is $\int f^j\dd\mu_y$. Protection of the full $n$-readout law thus implies Eq.~\eqref{eq:repeated-antipodal} for $g=f^j$, $1\le j\le n$. Conversely, these moment conditions protect every $n$-bit pattern, since $f^k(1-f)^{n-k}$ is a polynomial of degree at most $n$.

\begin{theorem}[Two continuous readouts]\label{thm:two}
If $f$ is continuous and the complete two-readout distribution is independent of every future axis, then $f$ is constant. Conversely, an angle-independent response is protected under every independent repetition.
\end{theorem}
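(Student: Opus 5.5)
Protection of the complete two-readout law amounts to the moment conditions for $g=f$ and $g=f^2$, as noted just before the statement. So I would apply Lemma~\ref{lem:antipodal} twice. This gives constants $m_1,m_2$ such that, almost everywhere,
\begin{equation*}
 p f+(1-p)f_\pi=m_1,\qquad p f^2+(1-p)f_\pi^2=m_2,
\end{equation*}
where $f_\pi(\lambda)=f(\lambda+\pi)$. The functions $p$, $f$ and $f_\pi$ are continuous, and $0<p<1$ everywhere because the wrapped-Cauchy kernel is strictly positive. Both identities therefore hold for every $\lambda$, not just almost everywhere.

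The key step is a variance identity. Think of $f(\lambda)$ and $f_\pi(\lambda)$ as the two values of a two-point distribution with weights $p$ and $1-p$. Then
\begin{equation*}
 m_2-m_1^2=p(1-p)\bigl[f(\lambda)-f(\lambda+\pi)\bigr]^2 .
\end{equation*}
So $p(1-p)\,\Delta^2$ is constant, where $\Delta(\lambda)=f(\lambda)-f(\lambda+\pi)$.

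I would then use antisymmetry. The function $\Delta$ satisfies $\Delta(\lambda+\pi)=-\Delta(\lambda)$, and it is continuous on the connected circle. Suppose $\Delta$ never vanishes; then it would have constant sign, which contradicts antisymmetry. Hence $\Delta$ has a zero. At that zero the constant $p(1-p)\Delta^2$ equals $0$, and since $p(1-p)>0$ everywhere, $\Delta\equiv0$. Substituting $f_\pi=f$ into the first identity gives $f\equiv m_1$, so $f$ is constant.

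Continuity enters only through the intermediate value argument. This is consistent with the separate claim that three measurable readouts suffice: without continuity, $\Delta$ could jump between $\pm\sqrt{c}$, which is why the measurable case needs a third moment.

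For the converse, suppose $f\equiv c$. The $n$ readouts are conditionally independent given $\lambda$, so any $n$-bit pattern with $k$ pluses has probability $c^k(1-c)^{n-k}$. This does not depend on $\lambda$, so integrating against the normalized $\mu_y$ returns the same value for every $y$.

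The main obstacle I expect is upgrading the almost-everywhere identities to pointwise ones, and confirming that $p$ is continuous and stays strictly inside $(0,1)$. Both follow from the explicit positive, smooth kernel~\eqref{eq:wrapped}.
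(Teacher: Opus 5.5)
Your proof is correct and follows the paper's argument step for step: Lemma~\ref{lem:antipodal} applied to $f$ and $f^2$, the identity $s-m^2=p(1-p)(f-f_\pi)^2$, continuity to upgrade the almost-everywhere identities, the intermediate value theorem for the antipodally odd contrast, and the same product-law converse. One small slip in your aside on the measurable case: there the contrast would take the $\lambda$-dependent values $\pm\sqrt{c/[p(1-p)]}$ rather than $\pm\sqrt{c}$, but this plays no role in the proof.
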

\begin{proof}
Joint protection includes marginal protection. Lemma~\ref{lem:antipodal} applied to $f$ and $f^2$ gives constants $m,s$ such that
\begin{equation}
\begin{aligned}
 pf+(1-p)f_\pi&=m,\\
 pf^2+(1-p)f_\pi^2&=s,
\end{aligned}
\end{equation}
where $f_\pi(\lambda)=f(\lambda+\pi)$. Consequently,
\begin{equation}\label{eq:repeated-variance}
 s-m^2=p(1-p)(f-f_\pi)^2.
\end{equation}
Continuity extends these identities from almost everywhere to every angle. The continuous antipodally odd function $f-f_\pi$ vanishes somewhere by the intermediate value theorem. The constant on the left of Eq.~\eqref{eq:repeated-variance} is therefore zero. Since $p(1-p)>0$, one obtains $f=f_\pi=m$ everywhere. A constant response gives an axis-independent product law for any number of repetitions.
\end{proof}

\begin{theorem}[Three measurable readouts]\label{thm:three}
If the complete law of three conditionally independent identical readouts is protected, every measurable response $0\le f\le1$ equals a constant almost everywhere.
\end{theorem}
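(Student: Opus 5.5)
The plan is to follow the proof of Theorem~\ref{thm:two}, using the third moment in place of the intermediate value theorem. Write $u=f$ and $v=f_\pi$. Protection of the complete three-readout law includes protection of the laws of one, two and three consecutive $+$ outcomes. Lemma~\ref{lem:antipodal} applied to $g=f,f^2,f^3$ then gives constants $m_1,m_2,m_3$ with
\begin{equation*}
 p\,u^j+(1-p)\,v^j=m_j,\qquad j=1,2,3,
\end{equation*}
almost everywhere. For almost every $\lambda$, read these as the first three moments of the two-point probability measure $\nu_\lambda=p(\lambda)\delta_{u(\lambda)}+[1-p(\lambda)]\delta_{v(\lambda)}$. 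Exactly as in Eq.~\eqref{eq:repeated-variance}, the variance of $\nu_\lambda$ is $c:=m_2-m_1^2=p(1-p)(u-v)^2$, which is constant almost everywhere.

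\emph{Case $c=0$.} Since $0<p<1$ everywhere, this forces $u=v$ almost everywhere. The first identity then gives $f=m_1$ almost everywhere, and we are done.

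\emph{Case $c>0$.} We derive a contradiction. The key step is that a two-point law with positive variance is determined by its first three moments. Standardize by $\tilde u=(u-m_1)/\sqrt c$ and $\tilde v=(v-m_1)/\sqrt c$. Mean zero and variance one force
\begin{equation*}
 \tilde u=\sqrt{(1-p)/p},\qquad \tilde v=-\sqrt{p/(1-p)},
\end{equation*}
or the same pair with the roles of the two atoms exchanged. The standardized third moment is then $\pm(1-2p)/\sqrt{p(1-p)}$. This expression is strictly monotone in $p$, with the sign fixed by which atom is the larger one.

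Hence the constant $m_3$ determines the ordered atoms $\alpha<\beta$ and their weights uniquely. So $\nu_\lambda$ is one fixed measure $w\delta_\alpha+(1-w)\delta_\beta$ for almost every $\lambda$. In particular $p(\lambda)\in\{w,1-w\}$ almost everywhere.

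It remains to show that $p$ cannot do this. Compute
\begin{equation*}
 \frac{A}{A_\pi}=\frac{\cosh\gamma_A+\cos(\theta_{x,a}-\lambda)}{\cosh\gamma_A-\cos(\theta_{x,a}-\lambda)},
\end{equation*}
so $p$ is a nonconstant real-analytic function of $\lambda$. Each of its level sets is therefore finite, and $\{p\in\{w,1-w\}\}$ has measure zero. This contradicts $p(\lambda)\in\{w,1-w\}$ almost everywhere, so $c=0$.

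The main obstacle is the moment-determinacy step: making sure that the sign ambiguity (which atom carries weight $p$) cannot be exploited measurably across $\lambda$. The monotone skewness formula resolves it, because the skewness fixes both the weight attached to the larger atom and its value simultaneously. The remaining steps are routine. Note that continuity of $f$ is never used, since all identities are needed only almost everywhere and the final contradiction uses only the analyticity of $p$.
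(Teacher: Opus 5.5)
Your proof is correct and is essentially the paper's argument. The paper likewise works with the constant central moments $V=p(1-p)d^2$ and $T=p(1-p)(1-2p)d^3$, where $d=f-f_\pi$. It sidesteps the sign ambiguity you flag by passing directly to $T^2/V^3=(1-2p)^2/[p(1-p)]$, which would have to be constant a.e.\ but is not constant for the explicit $p$; your finite-level-set remark makes that last step fully rigorous.

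One small slip: the second mean-zero, variance-one solution is the negated pair $\tilde u=-\sqrt{(1-p)/p}$, $\tilde v=\sqrt{p/(1-p)}$, not the pair with the atoms exchanged. Your skewness formula $\pm(1-2p)/\sqrt{p(1-p)}$ already matches the correct version.
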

\begin{proof}
The first three antipodal mixture moments are constant. Let $m$ denote the mean, $V$ the second central moment, $T$ the third central moment, and $d=f-f_\pi$. Then
\begin{equation}
\begin{aligned}
 V&=p(1-p)d^2,\\
 T&=p(1-p)(1-2p)d^3.
\end{aligned}
\end{equation}
If $V>0$, these identities imply
\begin{equation}
 \frac{T^2}{V^3}=\frac{(1-2p)^2}{p(1-p)}
 \quad\text{almost everywhere}.
\end{equation}
For the specified source,
\begin{equation}
 p(\lambda)=\frac12\left[
 1+\sech\gamma_A\cos(\theta_{x,a}-\lambda)\right].
\end{equation}
At positive finite width, $p$ varies over a nondegenerate interval. The preceding ratio is a nonconstant continuous function of angle and cannot equal a constant almost everywhere. Thus $V=0$, and Eq.~\eqref{eq:repeated-antipodal} gives $f=m$ almost everywhere.
\end{proof}

Three grouped observations also identify a two-component mixture in established statistical results~\cite{vandermeulen}. The additional source-specific step here is that the varying antipodal mixture weight excludes a nontrivial protected response.

To quantify departures from protection, write $P_y^{(2)}$ for the full two-bit distribution and define
\begin{equation}\label{eq:repeated-leakage}
\begin{aligned}
 \mathcal L_2&=\sup_{y,y'}\TV(P_y^{(2)},P_{y'}^{(2)}),\\
 \langle d^2\rangle
  &=\frac1{2\pi}\int_0^{2\pi}d(\lambda)^2\dd\lambda.
\end{aligned}
\end{equation}
Here $\TV(P,Q)=\tfrac12\sum_r|P(r)-Q(r)|$ and $\Gamma=\gamma_A+\gamma_B$.

\begin{theorem}[Finite angular bandwidth]\label{thm:bandwidth}
Assume exact one-record protection. If the antipodal contrast $d=f-f_\pi$ is a trigonometric polynomial of angular degree at most $J\ge1$, then, with $q_J=e^{-2(J+1)\Gamma}$,
\begin{equation}\label{eq:bandwidthbound}
 \mathcal L_2\ge
 \tanh\gamma_A\,\langle d^2\rangle
 \frac{q_J}{\sqrt{1-q_J^2}}.
\end{equation}
The degree constraint applies to $d$, not necessarily to $f$ itself.
\end{theorem}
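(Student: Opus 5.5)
The plan is to reduce $\mathcal L_2$ to the oscillation in $y$ of one scalar function, compute that function exactly using the antipodal structure, and then bound its oscillation from below by a Fourier argument that uses the degree bound on $d$.

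\emph{Step 1: reduce to $P_y(++)$.} Write $M(y)=\int f\dd\mu_y$ and $S(y)=\int f^2\dd\mu_y$. By conditional independence the two-bit law is
\[
P(++)=S,\qquad P(+-)=P(-+)=M-S,\qquad P(--)=1-2M+S .
\]
Exact one-record protection fixes $M\equiv m$. Then all four entries change by $\pm(S(y)-S(y'))$, so
\[
\TV(P^{(2)}_y,P^{(2)}_{y'})=2|S(y)-S(y')|,\qquad \mathcal L_2=2\,\operatorname{osc}_y S .
\]

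\emph{Step 2: exact form of $S$.} Since $R_B(\cdot;y)$ is $\pi$-periodic, for any $g$,
\[
\int_0^{2\pi}Ag\,R_B\dd\lambda=\tfrac12\int_0^{2\pi}w\,[pg+(1-p)g_\pi]\,R_B\dd\lambda,\qquad w=A+A_\pi .
\]
Take $g=f^2$. Lemma~\ref{lem:antipodal} gives $pf+(1-p)f_\pi=m$, and the same algebra as in Eq.~\eqref{eq:repeated-variance} yields
\[
pf^2+(1-p)f_\pi^2=m^2+p(1-p)d^2 .
\]
The key computation is that the weight collapses: $w\,p(1-p)=AA_\pi/(A+A_\pi)$. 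Writing $t=\theta_{x,a}-\lambda$, the two wrapped-Cauchy factors have denominators $\cosh\gamma_A\mp\cos t$, whose product is $\cosh^2\gamma_A-\cos^2t$. Hence
\[
\frac{AA_\pi}{A+A_\pi}=\frac{\tanh\gamma_A}{4\pi},
\]
a constant. Therefore
\[
S(y)=m^2+\frac{\tanh\gamma_A}{8\pi}\,\frac{N(y)}{D(y)},\qquad N(y)=\int_0^{2\pi} d^2R_B(\lambda;y)\dd\lambda .
\]
This is where $\tanh\gamma_A$ enters the bound.

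\emph{Step 3: Fourier structure.} Since $d$ is antipodally odd, it has only odd harmonics, each of degree $\le J$. Hence $d^2$ has only even harmonics, of degree $\le 2J$. The expansion of $p_\gamma$ shows that $R_B$ carries every even harmonic $k$ with weight $e^{-|k|\gamma_B}$. Consequently:
\begin{itemize}
\item $N(y)$ is a trigonometric polynomial with even harmonics of degree $\le 2J$, and its mean is $2\langle d^2\rangle$;
\item $D(y)$ has every even harmonic $k$, with coefficient proportional to $e^{-|k|\Gamma}$, including the harmonics $\pm2(J+1)$ with coefficient proportional to $q_J$.
\end{itemize}

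\emph{Step 4: lower bound on the oscillation.} Let $s_0$ be the midpoint of the range of $S$ and $\varepsilon=\tfrac12\operatorname{osc}S$. Rearranging Step 2,
\[
N(y)-\kappa D(y)=c^{-1}\,(S(y)-s_0)\,D(y),\qquad \kappa=(s_0-m^2)/c,\quad c=\tanh\gamma_A/(8\pi).
\]
Project both sides onto the harmonics of degree $\ge2(J+1)$, for instance by pairing against $e^{\pm 2i(J+1)y}$ or against the $L^2$-normalized tail of $D$. On the left, $N$ drops out and only $\kappa$ times the tail of $D$ survives; its size is governed by $q_J$ and the geometric tail $\sum_{n\ge0}q_Je^{-2n\Gamma}$. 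On the right, the projection is at most $c^{-1}\varepsilon\|D\|$. This bounds $\kappa$ by $\varepsilon$. Separately, averaging the identity over $y$ ties $\kappa\widehat D_0$ to $2\langle d^2\rangle$ up to an $O(\varepsilon)$ error. Combining the two relations eliminates $\kappa$ and gives $\varepsilon\gtrsim c\,\langle d^2\rangle\,q_J/\sqrt{1-q_J^2}$. Then $\mathcal L_2=4\varepsilon$ should give Eq.~\eqref{eq:bandwidthbound}.

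The main obstacle is this last step: choosing the test function and norm so that the constants come out exactly as $\tanh\gamma_A\,q_J/\sqrt{1-q_J^2}$. I would first try the single harmonic $2(J+1)$ with the sup norm of $D$, using the closed form $D(y)\propto(1-e^{-4\Gamma})/(1-2e^{-2\Gamma}\cos 2(y-\theta)+e^{-4\Gamma})$. If the factor $\sqrt{1-q_J^2}$ does not emerge that way, I would instead use an $L^2$ projection onto the span of $\{e^{\pm ik y}: k\ge 2J+2\}$, which naturally produces such square-root normalizations.
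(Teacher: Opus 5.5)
Your Steps 1--3 are correct and match the paper's own reduction: $\mathcal L_2=2\operatorname{osc}_yS$, the constant $AA_\pi/(A+A_\pi)=\tanh\gamma_A/(4\pi)$, $N$ supported on the harmonics $e^{2iny}$ with $|n|\le J$, and $D$ carrying $e^{\pm2i(J+1)y}$ with relative weight $q_J$. Your instinct to use the first harmonic absent from $N$ is also right.

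\textbf{The gap is Step 4.} That step is the only place where the constant in Eq.~\eqref{eq:bandwidthbound} is fixed, and the scheme you describe cannot reach it. You estimate $\kappa$ separately from one harmonic projection and from the $y$-average, then eliminate it with the triangle inequality. Bound each Fourier coefficient of $(S-s_0)D$ by $\varepsilon$ times the mean of $D$ (this is better than using $\sup D$). Then you get $|\kappa|q_J\le c^{-1}\varepsilon$ and $|2\pi\langle d^2\rangle-\kappa|\le c^{-1}\varepsilon$, and hence only
\[
 \mathcal L_2\ge\tanh\gamma_A\langle d^2\rangle\,\frac{q_J}{1+q_J}.
\]
This is strictly weaker than the claim, because $1+q_J>\sqrt{1-q_J^2}$. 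Your fallback, an unweighted $L^2$ projection onto the harmonics $|k|\ge2J+2$ combined with the average in the same way, gives the factor $q_J/\bigl(q_J+\sqrt{(1+e^{-4\Gamma})/2}\bigr)$. That is also smaller than $q_J/\sqrt{1-q_J^2}$, for example at $J=1$, $e^{-4\Gamma}=1/2$. As written, the proposal proves a bound of the right shape but not the stated one.

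\textbf{What is missing.} You need to remove $\kappa$ and the centring constant in a single step, with one test function, and then apply Cauchy--Schwarz in the weighted space $L^2(\nu)$. Here $\nu(\dd y)=D(y)\dd y$ on $[0,\pi)$, which is a probability measure.

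Put $U=e^{-2i(J+1)y}$, so that $|\E_\nu U|=q_J$. Pair your identity $N-\kappa D=c^{-1}(S-s_0)D$ with $\phi=U-\E_\nu U$.
\begin{itemize}
\item The $\kappa$ term drops out, because $\int_0^\pi D\phi\dd y=\E_\nu\phi=0$.
\item Since $\int_0^\pi NU\dd y=0$, the left side has modulus $\bigl|\E_\nu U\bigr|\int_0^\pi N\dd y=2\pi q_J\langle d^2\rangle$.
\end{itemize}
Hence
\[
 2\pi c\,q_J\langle d^2\rangle=\bigl|\E_\nu[(S-s_0)\phi]\bigr|\le\varepsilon\sqrt{\E_\nu|\phi|^2}=\varepsilon\sqrt{1-q_J^2}.
\]
With your $c=\tanh\gamma_A/(8\pi)$, this gives $\varepsilon\ge\tfrac14\tanh\gamma_A\langle d^2\rangle\,q_J/\sqrt{1-q_J^2}$. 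Then $\mathcal L_2=4\varepsilon$ yields Eq.~\eqref{eq:bandwidthbound}.

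This is exactly the paper's argument. With $v=S-m^2$, one has $\E_\nu v=\tfrac14\tanh\gamma_A\langle d^2\rangle$ and $\E_\nu(vU)=0$, so $|\E_\nu[(v-\E_\nu v)(U-\E_\nu U)]|=q_J\E_\nu v$. The paper bounds this by $\sqrt{\operatorname{Var}_\nu v}\,\sqrt{1-q_J^2}$ and uses $\operatorname{Var}_\nu v\le(\operatorname{osc}v)^2/4$. The factor $\sqrt{1-q_J^2}$ is $\sqrt{\operatorname{Var}_\nu U}$, which neither unweighted norms nor separately combined estimates produce.
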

\begin{proof}
Rotate coordinates so that $\theta_{x,a}=0$ and set $H=A+A_\pi$. Direct algebra gives
\begin{equation}
 Hp(1-p)=\frac{\tanh\gamma_A}{4\pi}.
\end{equation}
One-record protection implies $f=m+(1-p)d$. Hence the excess double-plus probability satisfies
\begin{equation}
\begin{aligned}
 v(y)&:=P_y(++)-m^2=\frac{N(y)}{D(y)},\\
 N(y)&=\frac{\tanh\gamma_A}{4\pi}
 \int_0^\pi d(\lambda)^2R_B(\lambda;y)\dd\lambda.
\end{aligned}
\end{equation}
The denominator can equivalently be written as $D(y)=\int_0^\pi H R_B\dd\lambda$. In the Fourier basis $e^{2iny}$, convolution yields
\begin{equation}
 D_0=\frac1\pi,\qquad
 |D_n|=\frac{e^{-2|n|\Gamma}}\pi.
\end{equation}
Since $d^2$ has angular degree at most $2J$, $N_n=0$ for $|n|>J$. Its mean gives
\begin{equation}
 c:=\frac{N_0}{D_0}
   =\frac{\tanh\gamma_A}{4}\langle d^2\rangle.
\end{equation}
Equip $y\in[0,\pi)$ with the probability measure
\begin{equation}
 \nu(\dd y)=\frac{D(y)}{\pi D_0}\dd y.
\end{equation}
For $n=J+1$ and $U=e^{-2iny}$, the first absent numerator harmonic gives
\begin{equation}
 \mathbb E_\nu v=c,\qquad
 |\mathbb E_\nu U|=q_J,\qquad
 \mathbb E_\nu(vU)=0.
\end{equation}
Cauchy--Schwarz therefore implies
\begin{equation}
\begin{aligned}
 c q_J
 &=\left|\mathbb E_\nu
       [(v-c)(U-\mathbb E_\nu U)]\right|\\
 &\le\sqrt{\operatorname{Var}_\nu v}\sqrt{1-q_J^2}\\
 &\le\frac{\operatorname{osc}v}{2}\sqrt{1-q_J^2}.
\end{aligned}
\end{equation}
The last step uses the variance bound for a real variable in an interval. With the one-bit mean fixed, the four pair probabilities change by $(\Delta v,-\Delta v,-\Delta v,\Delta v)$. Thus $\mathcal L_2=2\operatorname{osc}v$, which proves Eq.~\eqref{eq:bandwidthbound}.
\end{proof}

The bound concerns the continuum of future axes. It is not asserted to be sharp. Continuity without a uniform regularity constraint provides no corresponding uniform stability statement.

\begin{proposition}[Continuity does not imply uniform stability]
At fixed positive finite widths, there are continuous, exactly one-record-protected responses with nonvanishing angle dependence and $\mathcal L_2\to0$.
\end{proposition}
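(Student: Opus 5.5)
Since one-record protection forces $f=m+(1-p)d$ with $d$ antipodally odd (the parametrization used in the proof of Theorem~\ref{thm:bandwidth}), I would build a family $d_K$ of continuous, antipodally odd contrasts concentrated at high angular frequency, namely $d_K(\lambda)=\varepsilon\cos((2K+1)(\lambda-\theta_{x,a}))$ for $K\to\infty$, with fixed small $\varepsilon>0$ and $m=1/2$. Odd harmonics make $d_K(\lambda+\pi)=-d_K(\lambda)$, so $f_K=m+(1-p)d_K$ satisfies Eq.~\eqref{eq:repeated-antipodal}. By Lemma~\ref{lem:antipodal}, $f_K$ is therefore exactly one-record protected. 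It is continuous, and it has nonvanishing angle dependence since $d_K\neq0$. Because $0\le 1-p\le 1$, we have $0\le f_K\le1$ whenever $\varepsilon\le 1/2$. The angle dependence does not decay: $\langle d_K^2\rangle=\varepsilon^2/2$ for every $K$.

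Next, I would compute $\mathcal L_2$ through the representation in the proof of Theorem~\ref{thm:bandwidth}. There $v(y)=N(y)/D(y)$ with
\[
N(y)=\frac{\tanh\gamma_A}{4\pi}\int_0^\pi d_K^2R_B(\lambda;y)\dd\lambda,
\]
and $\mathcal L_2=2\operatorname{osc}v$. The square $d_K^2=\tfrac{\varepsilon^2}{2}\bigl[1+\cos(2(2K+1)(\lambda-\theta_{x,a}))\bigr]$ has only the zero harmonic and one harmonic of index $n=2K+1$ in the $e^{2in\cdot}$ basis.

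The translates of $R_B$ have even Fourier coefficients $e^{-|k|\gamma_B}$. Hence the convolution leaves $N(y)$ equal to a constant $N_0$ plus a term of size at most $C\varepsilon^2e^{-2(2K+1)\gamma_B}$. The denominator $D(y)$ is a fixed positive continuous function, independent of $K$ and bounded below by a positive constant. Writing $v=N_0/D+O(e^{-2(2K+1)\gamma_B})$, the remaining concern is the $y$-dependence of $N_0/D$.

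This is the step I expect to be the main obstacle: $D(y)$ itself varies with $y$, so $N_0/D$ is not constant, and the naive contrast does not give $\mathcal L_2\to0$. The fix I would try first is to shape $d^2$ so that its low-frequency part is proportional to $H=A+A_\pi$. Then $N$ tracks $D$ and the ratio is constant up to a high-frequency remainder.

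Concretely, I would take
\[
d_K(\lambda)=\varepsilon\sqrt{2H(\lambda)/H_{\max}}\,\cos((2K+1)(\lambda-\theta_{x,a})).
\]
Since $H$ is $\pi$-periodic, positive and smooth, $d_K$ is still antipodally odd and continuous, and
\[
d_K^2=\varepsilon^2\frac{H}{H_{\max}}\Bigl[1+\cos(2(2K+1)(\lambda-\theta_{x,a}))\Bigr].
\]
The first term gives $N\propto D$ exactly, because $D=\int_0^\pi HR_B$. The second term is $H$ times a high harmonic. Since $H$ is real analytic, its Fourier coefficients decay geometrically, so the product has negligible low-frequency content. After convolution with $R_B$, this term is $O(e^{-c K})$ uniformly in $y$ for some $c>0$; alternatively a Riemann--Lebesgue argument gives $o(1)$, which suffices.

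Hence $\operatorname{osc}v\to0$, so $\mathcal L_2\to0$, while $\langle d_K^2\rangle$ stays bounded below by $\varepsilon^2 \min H/H_{\max}>0$. This contrasts with Theorem~\ref{thm:bandwidth}: the bandwidth $J=2K+1$ diverges, so that bound $q_J$ degenerates and no contradiction arises.

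Finally, I would confirm the range constraint $0\le f_K\le1$ by choosing $\varepsilon\le1/(2\sqrt2)$ with $m=1/2$. This step is routine.
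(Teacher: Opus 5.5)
Your construction is correct. It rests on the same structural fact as the paper's proof. On the quotient circle, $v(y)$ is the average of the pair variance $p(1-p)d^2$ against the probability densities $q_y=HR_B/D$. Hence $v$ is $y$-independent whenever that variance is constant, equivalently $d^2\propto H$ by $Hp(1-p)=\tanh\gamma_A/(4\pi)$.

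The difference is the limiting mechanism.
\begin{itemize}
\item The paper takes $p(1-p)d^2=c_0^2\tanh^2(K\sin\lambda)$. This converges to $c_0^2$ almost everywhere, and the paper concludes by dominated convergence using only $\|q\|_\infty<\infty$.
\item Your choice gives $p(1-p)d_K^2=\frac{\varepsilon^2\tanh\gamma_A}{2\pi H_{\max}}\cos^2((2K+1)(\lambda-\theta_{x,a}))$. The constant half of $\cos^2$ contributes an exactly constant amount to $v$. The remaining high harmonic tends to zero only weakly, so you need decay of the Fourier coefficients of $HR_B(\cdot;y)$ uniformly in $y$.
\end{itemize}

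Your route buys two things. First, $HR_B(\cdot;y)$ is analytic in the strip $|\operatorname{Im}\lambda|<\min(\gamma_A,\gamma_B)$ uniformly in $y$, so your leakage is exponentially small in $K$. The paper's bound $c_0^2\|q\|_\infty\int_0^\pi\sech^2(K\sin\lambda)\dd\lambda$ is only of order $1/K$. Second, your $\langle d_K^2\rangle$ stays bounded below, which makes the contrast with Theorem~\ref{thm:bandwidth} explicit.

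The paper's route buys two things in turn. It needs no analyticity. Its responses also converge pointwise to the discontinuous sign response described right after the proposition, whose one- and two-readout laws are exactly protected.

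A few minor points:
\begin{itemize}
\item Because of the factor $\sqrt H$, which has infinitely many harmonics, your final $d_K$ is not a trigonometric polynomial for any $K$. The remark that ``the bandwidth $J=2K+1$ diverges'' therefore applies only to your first, naive family. Theorem~\ref{thm:bandwidth} simply does not apply to the final one.
\item If you use the Riemann--Lebesgue alternative, state why it is uniform in $y$: the map $y\mapsto HR_B(\cdot;y)$ is continuous into $L^1$ on a compact set.
\item To turn the lower bound on $\langle d_K^2\rangle$ into nonvanishing variation of the response itself, note that $\|d_K\|_\infty\le\operatorname{osc}f_K$.
\end{itemize}
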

\begin{proof}
Again choose coordinates with $\theta_{x,a}=0$. For
$0<c_0<\tfrac12\tanh(\gamma_A/2)$, define
\begin{equation}\label{eq:repeated-smooth}
 f_K(\lambda)=\frac12+
 c_0\tanh(K\sin\lambda)
 \sqrt{\frac{1-p(\lambda)}{p(\lambda)}}.
\end{equation}
These smooth responses lie in $[0,1]$. Their weighted antipodal mean is $1/2$ and their antipodal mixture variance is $c_0^2\tanh^2(K\sin\lambda)$, which converges to $c_0^2$ almost everywhere. On the quotient circle, the excess double-plus probability is the average of this variance against the density
\begin{equation}
 q_y(\lambda)=\frac{H(\lambda)R_B(\lambda;y)}{D(y)}.
\end{equation}
These densities on $0\le\lambda<\pi$ are uniformly bounded over the compact angle and setting domain. Consequently,
\begin{equation}
\begin{aligned}
 \sup_y|v_K(y)-c_0^2|
 &\le c_0^2\|q\|_\infty\\
 &\quad\times\int_0^\pi
   [1-\tanh^2(K\sin\lambda)]\dd\lambda.
\end{aligned}
\end{equation}
The right-hand side tends to zero by dominated convergence. It follows that $\mathcal L_2=2\operatorname{osc}v_K\to0$. In contrast, evaluation at $\lambda=\pi/2,3\pi/2$ gives $\operatorname{osc}f_K\ge2c_0\tanh K$. Thus the response variation does not vanish. The sequence has no fixed angular bandwidth.
\end{proof}

The finite-error behavior is illustrated in Fig.~\ref{fig:repeated-finite-error}; its numerical evaluation is described below.

\begin{figure}[tbp]
\centering
\includegraphics[width=\columnwidth]{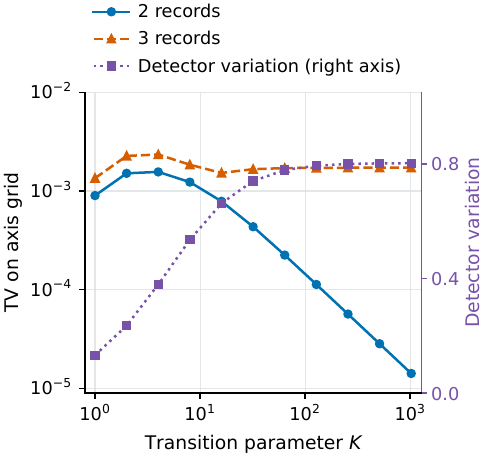}
\caption{Finite-error behavior of the smooth response in Eq.~\eqref{eq:repeated-smooth}, with equal widths $g=0.2$ and $c_0=0.04$. Two-record leakage (blue solid line, circles) decreases as $K$ grows, whereas three-record leakage (orange dashed line, triangles) persists for this family. Both are maximum total-variation distances on the 96-axis grid, read against the left axis. Detector variation $\max f_K-\min f_K$ (purple dotted line, squares; right axis) remains finite and is evaluated on the 65,536-angle grid. These grid extrema are numerical quantities, distinct from the continuum statements proved in the text.}
\label{fig:repeated-finite-error}
\end{figure}

Replacing the hyperbolic tangent in Eq.~\eqref{eq:repeated-smooth} by an antipodally odd sign produces a measurable discontinuous response with constant pair mean $1/2$ and pair variance $c_0^2$. Both its one- and two-readout laws are therefore protected, establishing the necessity of a regularity assumption in Theorem~\ref{thm:two}. Theorem~\ref{thm:three} excludes protection of three independent readouts for this response.

Copying a recorded bit has a different joint law. If every subsequent entry equals the first, only the all-$+$ and all-$-$ strings occur, with probabilities $m$ and $1-m$. A protected one-bit law then remains protected for arbitrarily many copies. This distinction separates repeated access to the unresolved source from repeated storage of an existing record.

\subsection{Numerical results}

For $x=0,a=+1$, consider the exactly protected family
\begin{equation}\label{eq:repeated-sinusoidal}
 f_\epsilon(\lambda)=\frac12+
 \epsilon[1-p(\lambda)]\sin\lambda.
\end{equation}
For $0\le\epsilon\le1/2$, these responses lie in $[0,1]$. Figure~\ref{fig:repeated-access} contrasts a single readout, two independent readouts, and copying the first bit. Equal widths $g\in\{0.15,0.2,0.4,0.8\}$ and strengths $\epsilon\in\{0,0.1,0.25,0.5\}$ are evaluated at 96 equally spaced future axes on $[0,\pi)$. Periodic integration uses 32,768- and 65,536-point angle grids. The smooth family in Eq.~\eqref{eq:repeated-smooth} is evaluated at $g=0.2$, $c_0=0.04$, and $K=2^j$, $j=0,\ldots,10$. Extrema over the axis grid are lower bounds on continuum suprema; Eq.~\eqref{eq:bandwidthbound} supplies a separate analytical continuum bound.

\begin{figure}[tbp]
\centering
\includegraphics[width=\columnwidth]{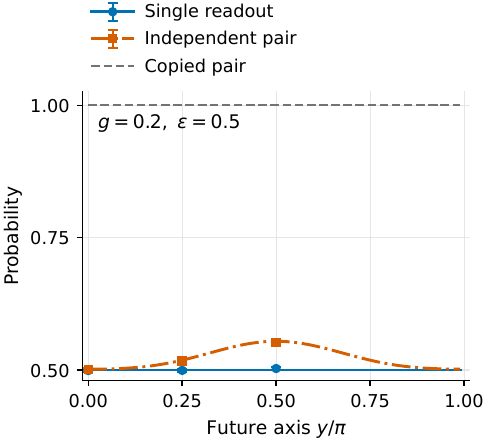}
\caption{Repeated passive readout for the sinusoidal detector in Eq.~\eqref{eq:repeated-sinusoidal}, at $g=0.2$ and $\epsilon=0.5$. The single-readout success probability (blue solid line) is independent of the future axis, whereas the agreement probability of two fresh readouts (orange dash-dotted line) varies. Copies of the first bit agree with certainty (gray dashed line). Lines give model probabilities. Blue circles and orange squares represent the respective 100,000-history Monte Carlo estimates; error bars are pointwise 95\% Wilson intervals.}
\label{fig:repeated-access}
\end{figure}

Continuous Monte Carlo sampling uses 100,000 independent source histories in each of 18 contexts. These comprise $\epsilon=0.25,0.5$ and the smooth responses with $K=1,4,16,64$, each at $y=0,\pi/4,\pi/2$ and $g=0.2$. A wrapped-Cauchy proposal with density $A$ is accepted with probability
\begin{equation}
 \frac{R_B(\lambda;y)}{\coth(\gamma_B)/\pi}.
\end{equation}
The denominator is an analytical maximum of $R_B$, so rejection sampling produces Eq.~\eqref{eq:repeated-source} without discretizing the source angle. Three fresh Bernoulli readouts per accepted angle give 1,800,000 source histories and 5,400,000 detector outcomes. The copying comparison reuses the first bit and contributes no additional independent detector trials.

At $g=0.2,\epsilon=0.5$, each individual readout has success probability $1/2$, whereas the joint probabilities depend on the later axis as shown in Table~\ref{tab:repeated-probabilities}. The two-record total-variation distance between $0$ and $\pi/2$ is $0.0526857392$. The maximum three-record distance on the axis grid is $0.0790286087$. For this sinusoidal response, $d=\epsilon\sin\lambda$ has degree $J=1$, and Eq.~\eqref{eq:bandwidthbound} gives the lower bound $0.0050859086$.

\begin{table}[t]
\centering
\begin{tabular}{@{}lrr@{}}
\toprule
Future axis & $P_y(++)$ & $P_y(R_1=R_2)$\\
\midrule
$0$ & 0.2507726106 & 0.5015452212\\
$\pi/2$ & 0.2771154802 & 0.5542309604\\
\bottomrule
\end{tabular}
\caption{Two independent passive readouts of the sinusoidal detector at $g=0.2$ and $\epsilon=0.5$. The single-readout success probability is $1/2$ at both axes.}
\label{tab:repeated-probabilities}
\end{table}

Figure~\ref{fig:repeated-parameters} displays the dependence on width and strength. Copying the first protected bit gives zero theoretical leakage.

\begin{figure}[tbp]
\centering
\includegraphics[width=\columnwidth]{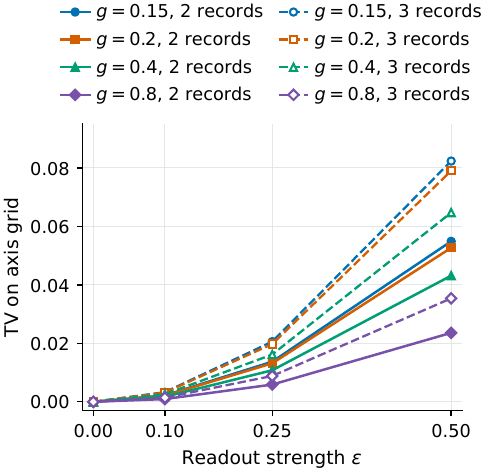}
\caption{Dependence of repeated-readout leakage on the common source width $g$ and detector strength $\epsilon$ for Eq.~\eqref{eq:repeated-sinusoidal}. The maximum pairwise total-variation distance over 96 future axes is shown for two (solid lines, filled markers) and three (dashed lines, open markers) fresh readouts. Widths $g=0.15,0.2,0.4,0.8$ are represented by blue circles, orange squares, green triangles, and purple diamonds, respectively. Grid maxima are lower bounds on the corresponding continuum suprema.}
\label{fig:repeated-parameters}
\end{figure}

In the smooth family shown in Fig.~\ref{fig:repeated-finite-error}, the two-record grid maximum decreases to $1.4166\times10^{-5}$ at $K=1024$, while the angular response range on the integration grid remains $0.8024$. The three-record grid maximum is $0.0017208$ at the same parameters. These results illustrate both the failure of uniform two-record stability and the stronger constraint imposed by a third independent readout.

The Monte Carlo agreement probabilities are compared with deterministic integration in Fig.~\ref{fig:repeated-mc}. Five event probabilities are estimated in each context: one, two, and three consecutive $+$ outcomes, agreement of the first two bits, and agreement of all three. Pointwise uncertainty is described by 95\% Wilson intervals. Exact Clopper--Pearson intervals with Bonferroni correction provide simultaneous 95\% coverage across the 90 event probabilities; all model values lie within those intervals. The largest absolute standardized deviation is approximately $2.98$ for readout events and $3.56$ for the separate source-moment comparisons. The largest change under quadrature refinement is $1.82\times10^{-14}$, a floating-point diagnostic rather than a rigorous discretization bound.

\begin{figure}[tbp]
\centering
\includegraphics[width=\columnwidth]{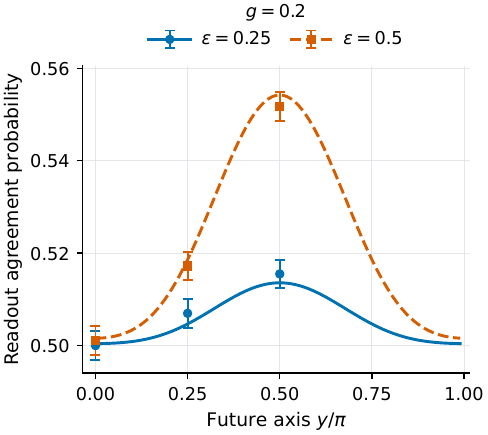}
\caption{Agreement of two independent passive readouts at $g=0.2$ for the sinusoidal response in Eq.~\eqref{eq:repeated-sinusoidal}. Blue solid lines and circles correspond to $\epsilon=0.25$; orange dashed lines and squares correspond to $\epsilon=0.5$. Curves use deterministic integration; markers use 100,000 continuous source histories per context. Error bars are pointwise 95\% Wilson intervals and describe Monte Carlo sampling uncertainty within the specified model. Simultaneous exact intervals for the 90 event probabilities are included in the numerical data.}
\label{fig:repeated-mc}
\end{figure}

The obstruction concerns identical independent readouts of an unchanged hidden angle under the fixed source law. A state update, a modified source measure, shared detector randomness, or restricted access changes those assumptions. The network, controller, and memory constructions in Secs.~\ref{sec:network}--\ref{sec:memory} address different apparatus extensions; they are not constrained by these passive-readout theorems unless they implement the same interface.

\section{Network composition}
\label{sec:exp2}
\label{sec:network}

When independently specified sources meet at a joint detector, a correction
that reproduces their quantum record probabilities need not separate across
the sources. We examine an entanglement-swapping network: source \(S_1\)
supplies \(A,B_1\), source \(S_2\) supplies \(B_2,C\), and a central
detector measures \(B_1B_2\) jointly. Independent-source constraints and
retrocausal descriptions of swapping provide the relevant setting
\cite{branciard,price}. For the positive-history extension defined below,
we characterize the correction rank and determine whether an earlier
archive remains independent of subsequent analyzer choices.

\subsection{Source model and correction rank}

\begin{postulate}[Reference preparations and apparatus]
The source reference measure is
\(\mu_1(\dd\lambda_1)\mu_2(\dd\lambda_2)
=\dd\lambda_1\dd\lambda_2/(2\pi)^2\), with independent angles on the
circle. The emitted pairs have angles
\((\lambda_1,\lambda_1+\pi)\) and
\((\lambda_2,\lambda_2+\pi)\). Outer detectors retain the wrapped-Cauchy
raw factors. The central raw kernel is a Bell effect evaluated on the
incoming hidden product spins. A separate paired quantum description uses
two fixed independent Werner states, calibrated outer effects, and a Bell
measurement. This pairing is model structure, not a consequence of the
isolated source law.
\end{postulate}

Independence in this assumption concerns the preparation factors. The
posterior hidden distribution after imposing the apparatus history need
not factor. In particular, independence after conditioning on a central
outcome is not imposed.

For \(g>0\), define
\begin{align}
 p_g(t)&=\frac{\sinh g}{2\pi(\cosh g-\cos t)},\\
 \theta_{x,a}&=x+\pi\mathbf1_{a=-1},\qquad
 n_x=(\sin x,0,\cos x).
\end{align}
The two outer raw factors, for outcomes \(a,c\in\{+1,-1\}\), are
\begin{align}
 A_{x,a}(\lambda_1)&=p_g(\theta_{x,a}-\lambda_1),\\
 C_{z,c}(\lambda_2)&=p_g(\theta_{z,c}-\lambda_2-\pi).
\end{align}
The Bell effects have the following Pauli signs:
\begin{equation}
\begin{array}{c|rrr}
b&t_{bx}&t_{by}&t_{bz}\\\hline
\Phi^+&1&-1&1\\
\Phi^-&-1&1&1\\
\Psi^+&1&1&-1\\
\Psi^-&-1&-1&-1
\end{array}
\end{equation}
The effects are
\begin{equation}
 \Pi_b=\frac14\left(I+\sum_{j=x,y,z}t_{bj}\sigma_j\otimes\sigma_j\right).
\end{equation}
Writing \(|\lambda\rangle=\cos(\lambda/2)|0\rangle+
\sin(\lambda/2)|1\rangle\) and
\(|\chi\rangle=|\lambda_1+\pi\rangle\otimes|\lambda_2\rangle\),
the central kernel is
\begin{equation}
\begin{aligned}
 q_b(\lambda_1,\lambda_2)&=\langle\chi|\Pi_b|\chi\rangle\\
 &=\frac14-\frac{t_{bx}}4\sin\lambda_1\sin\lambda_2\\
 &\phantom{={}}-\frac{t_{bz}}4\cos\lambda_1\cos\lambda_2.
\end{aligned}
\label{eq:e2-junction}
\end{equation}
It satisfies \(0\le q_b\le1/2\) and
\(\sum_bq_b=1\). The raw branch measure is
\begin{equation}
\begin{aligned}
 R_{abc}^{xz}(\dd\lambda_1\dd\lambda_2)
 &=A_{x,a}(\lambda_1)\\
 &\quad\times q_b(\lambda_1,\lambda_2)C_{z,c}(\lambda_2)\\
 &\quad\times\frac{\dd\lambda_1\dd\lambda_2}{(2\pi)^2},\\
 d_{abc}^{xz}&=\int R_{abc}^{xz}.
\end{aligned}
\end{equation}
The outer factors are raw weights; they are not normalized forward
transition probabilities when their outcomes are summed at fixed angle.

The quantum benchmark uses
\begin{align}
 v&=\sech g,\qquad w=\frac1{1+\tanh^2g},\\
 \omega_w&=\frac14\left(I-w\sum_{j=x,y,z}\sigma_j\otimes\sigma_j\right),\\
 E_a^g(x)&=\frac{I+av\,n_x\cdot\sigma}{2}.
\end{align}
With \(\rho_0=\omega_w^{AB_1}\otimes\omega_w^{B_2C}\), the
complete branch probabilities are
\begin{align}
 G_{abc}^{xz}&=E_a^g(x)\otimes\Pi_b\otimes E_c^g(z),\\
 n_{abc}^{xz}&=\Tr\!\left(\rho_0G_{abc}^{xz}\right).
\end{align}
Both source states and the width are held fixed when future settings or
policies are compared. Quantum network normalization uses the usual
instrument framework \cite{chiribella}.

We ask whether, at a fixed central outcome, the correction
\(C_{ij}=n_{ij}/d_{ij}\) factors into independent functions of the
external record branches, or more generally admits a fixed finite sum
of nonnegative separable terms as the angular resolution increases.

\begin{lemma}[Branch masses]
Let \(r=e^{-2g}\), \(\kappa=w^2v^2\), and
\begin{equation}
 T_b(x,z)=t_{bx}\sin x\sin z+t_{bz}\cos x\cos z.
\end{equation}
Then
\begin{equation}
\begin{aligned}
 d_{abc}^{xz}&=\frac{1+ac\,rT_b(x,z)}{16\pi^2},\\
 n_{abc}^{xz}&=\frac{1+ac\,\kappa T_b(x,z)}{16}.
\end{aligned}
 \label{eq:e2-masses}
\end{equation}
Consequently \(\sum_{a,b,c}d_{abc}^{xz}=\pi^{-2}\) and
\(\sum_{a,b,c}n_{abc}^{xz}=1\).
\end{lemma}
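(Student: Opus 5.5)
The plan is to compute both families of branch masses directly, using the Fourier expansion of the wrapped-Cauchy kernel quoted in the proof of Lemma~\ref{lem:antipodal} for the raw masses and Pauli-trace algebra for the quantum benchmark. The two sums then follow because the sign table gives $\sum_b T_b=0$.

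\emph{Raw masses.} The central kernel in Eq.~\eqref{eq:e2-junction} is a sum of terms that factor across $\lambda_1$ and $\lambda_2$. Each term involves only $1$, $\sin\lambda_i$ or $\cos\lambda_i$, so $d_{abc}^{xz}$ reduces to products of zeroth and first Fourier moments of the two outer factors. From $p_g(t)=(2\pi)^{-1}\sum_k e^{-|k|g}e^{ikt}$ I will use
\begin{equation*}
\int_0^{2\pi}p_g(\theta-\lambda)\,e^{i\lambda}\dd\lambda=e^{-g}e^{i\theta},\qquad \int_0^{2\pi} p_g=1 .
\end{equation*}
This gives the following moments.
\begin{itemize}
\item For the first source: $\int A_{x,a}\dd\lambda_1=1$, $\int A_{x,a}\cos\lambda_1\dd\lambda_1=a e^{-g}\cos x$, and $\int A_{x,a}\sin\lambda_1\dd\lambda_1=a e^{-g}\sin x$. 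Here I use that $\theta_{x,a}=x+\pi\mathbf 1_{a=-1}$ contributes the factor $a$.
\item For the second source, the extra shift by $\pi$ gives $\int C_{z,c}\cos\lambda_2\dd\lambda_2=-c e^{-g}\cos z$, and similarly $-ce^{-g}\sin z$ for the sine moment.
\end{itemize}
Inserting these moments, with the prefactor $(2\pi)^{-2}$, the two minus signs in $q_b$ cancel against the $\pi$-shift signs. The result is $(2\pi)^{-2}\tfrac14[1+ac\,e^{-2g}T_b(x,z)]$, which is the stated form of $d^{xz}_{abc}$.

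\emph{Quantum masses.} I expand $\Pi_b=\tfrac14(I+\sum_j t_{bj}\sigma_j\otimes\sigma_j)$ on $B_1B_2$. The trace then factors across the two Werner states, since $\rho_0$ is a product state. Each factor is computed with $\Tr\sigma_k=0$, $\Tr(\sigma_k\sigma_j)=2\delta_{kj}$ and $\Tr(\sigma_k E_a^g(x))=a v\,(n_x)_k$:
\begin{equation*}
\Tr\!\big[\omega_w(E_a^g(x)\otimes I)\big]=\tfrac12,\qquad \Tr\!\big[\omega_w(E_a^g(x)\otimes\sigma_j)\big]=-\tfrac12\,a v w\,(n_x)_j .
\end{equation*}
The second Werner factor has the same form with $(c,z)$ in place of $(a,x)$; the order of the tensor factors is irrelevant because $\omega_w$ is swap symmetric. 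Multiplying the two factors, the product of the two minus signs is $+1$. The $y$ term vanishes because $(n_x)_y=0$. This yields $\tfrac1{16}[1+ac\,w^2v^2T_b(x,z)]$.

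\emph{Sums.} From the sign table, $\sum_b t_{bx}=\sum_b t_{bz}=0$, hence $\sum_b T_b=0$. Summing over the $16$ triples $(a,b,c)$ therefore gives $16/(16\pi^2)=\pi^{-2}$ for the raw masses and $16/16=1$ for the quantum masses.

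The main obstacle is bookkeeping of signs: the $\pi$ shifts in $\theta_{x,a}$ and in $C_{z,c}$, and the sign conventions of $q_b$ and of the Werner correlations. I will cross-check by confirming that $d\ge0$ and $n\ge0$, which requires $|rT_b|\le1$ and $|\kappa T_b|\le1$. I will also check the known singlet anticorrelation in the limit $g\to0$.
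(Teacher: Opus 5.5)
Your proposal is correct and follows the paper's proof. The raw masses come from the first circular moment $e^{-g}e^{i\theta}$ applied to both outer factors, with the antipodal offsets cancelling, and the quantum masses come from the Werner correlation tensor $-wI_3$ and the outer visibilities $v$, with unpaired Pauli terms vanishing; your normalization via $\sum_b T_b=0$ is an equally valid way to carry out the final summation.
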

\begin{proof}
The first circular moment is
\begin{equation}
 \int p_g(\theta-\lambda)e^{i\lambda}\dd\lambda
 =e^{-g}e^{i\theta}.
\end{equation}
Applying it to the two factors in
\eqref{eq:e2-junction} gives the raw expression; the two antipodal offsets
cancel in their product. In the quantum trace, each Werner source gives
the correlation tensor \(-wI_3\). Their product contributes \(w^2\),
and the two outer effects contribute \(v^2\). All terms with one
unpaired Pauli operator vanish. Summing the remaining expression over
the outcomes proves the two normalization identities.
\end{proof}

For \(b=\Phi^+\), define the effective branch axes
\(\alpha=x+\pi\mathbf1_{a=-1}\) and
\(\beta=z+\pi\mathbf1_{c=-1}\). Removing the common constant
\(\pi^2\) from the correction gives
\begin{equation}
\begin{aligned}
 D_{ij}&=C_{ij}/\pi^2
       =F\bigl(\cos(\alpha_i-\beta_j)\bigr),\\
 F(t)&=\frac{1+\kappa t}{1+rt}.
\end{aligned}
 \label{eq:e2-correction}
\end{equation}

\begin{theorem}[Nonseparability of the correction]
For every finite \(g>0\), the correction in
\eqref{eq:e2-correction} cannot be written as \(D_{ij}=f_i h_j\)
for all external branches.
\end{theorem}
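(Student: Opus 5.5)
The plan is to exhibit a single $2\times2$ minor of the correction that is nonzero. If $D_{ij}=f_ih_j$ held for all external branches, every $2\times2$ minor would vanish:
\[
D_{11}D_{22}-D_{12}D_{21}=f_1h_1f_2h_2-f_1h_2f_2h_1=0 .
\]
This holds regardless of signs, although here all entries are positive by Eq.~\eqref{eq:e2-masses}, because $|rT_b|\le r<1$ and $|\kappa T_b|\le\kappa<1$. So it suffices to choose two Alice branches and two Charlie branches whose minor is nonzero for every finite $g>0$.

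\textbf{Choice of branches.} I would take $x=0$ with $a=+1$ and $a=-1$, giving $\alpha_1=0$ and $\alpha_2=\pi$. For Charlie I would take $c=+1$ with $z=0$ and $z=\pi/2$, giving $\beta_1=0$ and $\beta_2=\pi/2$. The cosines $\cos(\alpha_i-\beta_j)$ are then $1,0,-1,0$ for $(i,j)=(1,1),(1,2),(2,1),(2,2)$. Since $F(0)=1$, Eq.~\eqref{eq:e2-correction} gives
\[
D_{11}D_{22}-D_{12}D_{21}=F(1)-F(-1)=\frac{1+\kappa}{1+r}-\frac{1-\kappa}{1-r}.
\]
Clearing the positive denominators, this vanishes if and only if $(1+\kappa)(1-r)=(1-\kappa)(1+r)$, which is equivalent to $\kappa=r$.

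\textbf{Showing $\kappa\neq r$.} This is the only step needing care, and it is an explicit hyperbolic identity. Using $1+\tanh^2g=\cosh 2g/\cosh^2g$ gives
\[
wv=\frac{\cosh^2 g}{\cosh 2g}\,\sech g=\frac{\cosh g}{\cosh 2g},
\qquad\text{so}\qquad
\kappa=\frac{\cosh^2 g}{\cosh^2 2g}.
\]
Both $wv$ and $e^{-g}$ are positive, so it is enough to compare $wv$ with $e^{-g}$, that is, $e^{g}\cosh g$ with $\cosh 2g$. These equal $(e^{2g}+1)/2$ and $(e^{2g}+e^{-2g})/2$ respectively. Their difference is $(1-e^{-2g})/2>0$ for every $g>0$. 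Hence $\kappa>r$ strictly, the minor is strictly positive, and no factorization $D_{ij}=f_ih_j$ exists.

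I would close with a remark on why the factorization fails structurally. $F$ is constant only when $\kappa=r$, and that coincidence occurs only in the excluded limit $g=0$. This prepares the stronger rank and nonnegative-rank statement announced in the introduction for even uniform angular grids, where $F(\cos(\alpha_i-\beta_j))$ is a circulant whose Fourier coefficients must be shown to be nonzero.
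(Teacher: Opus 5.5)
Your proof is correct and follows essentially the same route as the paper: exhibit a nonvanishing $2\times2$ minor of $D$, which is impossible for a rank-one product $f_ih_j$, and reduce its nonvanishing to the strict inequality $\kappa>r$. The differences are cosmetic. The paper takes $\alpha,\beta\in\{0,\pi/2\}$ to obtain the minor $F(1)^2-1$ and checks $\kappa=r(1+r)^2/(1+r^2)^2>r$. You instead use the antipodal branch $\alpha=\pi$ to obtain $F(1)-F(-1)$, and you verify $\kappa>r$ through $e^{g}\cosh g>\cosh 2g$.
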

\begin{proof}
For \(0<r<1\),
\begin{equation}
 \kappa=r\frac{(1+r)^2}{(1+r^2)^2}>r.
\end{equation}
Taking \(\alpha,\beta\in\{0,\pi/2\}\) produces the minor
\begin{equation}
\begin{aligned}
 \Delta_g&=\det\begin{pmatrix}F(1)&1\\1&F(1)\end{pmatrix}\\
 &=\left(\frac{1+\kappa}{1+r}\right)^2-1>0.
\end{aligned}
 \label{eq:e2-minor}
\end{equation}
A product of two scalar branch factors has rank at most one and all
such minors vanish.
\end{proof}

At \(g=0.2\), the minor equals \(0.2807745972218274\). Its
positivity follows from the exact expression in
\eqref{eq:e2-minor}. Figure~\ref{fig:e2-correction} shows the
correction on the eight-branch grid used in the numerical study.

\begin{figure}[tbp]
\centering
\includegraphics[width=\columnwidth]{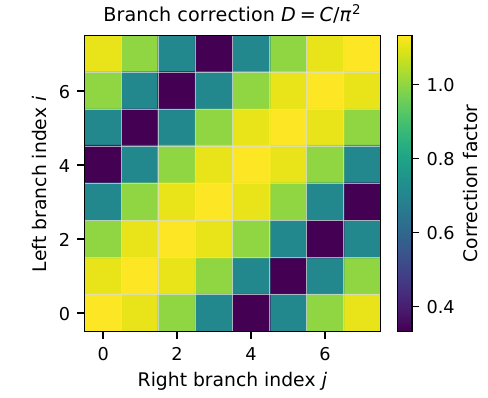}
\caption{Dimensionless network correction \(D=C/\pi^2\) for eight
effective outer branch axes \(\alpha_i=\beta_i=2\pi i/8\), central
outcome \(\Phi^+\), and \(g=0.2\). The entries are deterministic
evaluations of Eq.~\eqref{eq:e2-correction}; the color bar gives the
correction factor and the grid marks individual matrix cells.}
\label{fig:e2-correction}
\end{figure}

\begin{theorem}[Nonnegative rank on even uniform grids]
For any even \(N\ge2\), choose
\(\alpha_i=\beta_i=2\pi i/N\). Then
\begin{equation}
 \operatorname{rank}(D)=\operatorname{rank}_+(D)=N,
\end{equation}
where \(\operatorname{rank}_+\) is the minimum number of terms in
\begin{equation}
 D_{ij}=\sum_{m=1}^{r_{\mathrm{msg}}}u_{im}v_{mj},
 \qquad u_{im},v_{mj}\ge0.
\end{equation}
\end{theorem}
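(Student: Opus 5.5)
The plan is to exploit the circulant structure of $D$. On the uniform grid $\alpha_i=\beta_i=2\pi i/N$, each entry $D_{ij}=F\bigl(\cos(2\pi(i-j)/N)\bigr)$ depends only on $i-j \bmod N$. So $D$ is a real symmetric circulant matrix, and its eigenvalues are the discrete Fourier transforms
\begin{equation}
 \mu_k=\sum_{j=0}^{N-1}F\bigl(\cos(2\pi j/N)\bigr)e^{-2\pi i jk/N},\qquad k=0,\dots,N-1.
\end{equation}
Thus $\operatorname{rank}(D)=N$ is equivalent to $\mu_k\neq0$ for every $k$. The upper bound $\operatorname{rank}_+(D)\le N$ is trivial: $D=D\cdot I$, and both factors are entrywise nonnegative once we check $F>0$ on $[-1,1]$. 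This holds because $0<r<\kappa<1$; the inequality $\kappa<1$ follows from $\kappa=w^2v^2$ with $w,v<1$. Since $\operatorname{rank}_+\ge\operatorname{rank}$ always, everything reduces to showing that all $\mu_k$ are nonzero.

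To compute $\mu_k$, I will use partial fractions:
\begin{equation}
 F(t)=\frac{\kappa}{r}+\Bigl(1-\frac{\kappa}{r}\Bigr)\frac{1}{1+rt}.
\end{equation}
Here $1-\kappa/r<0$ by the inequality $\kappa>r$ established in the proof of nonseparability. The function $1/(1+r\cos\phi)$ has the classical Poisson-type Fourier series
\begin{equation}
 \frac{1}{1+r\cos\phi}=\frac{1}{\sqrt{1-r^2}}\sum_{n\in\mathbb Z}(-s)^{|n|}e^{in\phi},\qquad s=\frac{1-\sqrt{1-r^2}}{r}\in(0,1),
\end{equation}
which I will verify by summing the geometric series. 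Sampling at $N$ equispaced points aliases the coefficients. For $0\le k<N$ this gives
\begin{equation}
 \mu_k=N\frac{\kappa}{r}\delta_{k0}+\frac{N(1-\kappa/r)}{\sqrt{1-r^2}}\sum_{m\in\mathbb Z}(-s)^{|k+mN|}.
\end{equation}
For even $N$ we have $(-s)^{N}=s^{N}$ and $(-1)^{N-k}=(-1)^k$. Summing the two geometric tails then yields
\begin{equation}
 \sum_{m}(-s)^{|k+mN|}=(-1)^k\frac{s^k+s^{N-k}}{1-s^N}.
\end{equation}
This quantity is nonzero, so $\mu_k\ne0$ for $k\ne0$. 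For $k=0$ no delicate computation is needed: $\mu_0=\sum_jF(\cos(2\pi j/N))>0$ because $F>0$.

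The main obstacle is the aliasing step, and this is exactly where evenness enters. For odd $N$ the two tails carry opposite signs $(-1)^k$ and $(-1)^{N-k}$. They can then cancel, and indeed for $k=(N\pm1)/2$ one would need a separate check. With $N$ even the signs agree and no cancellation is possible. I will carefully verify the geometric-series bookkeeping for the ranges $k+mN\ge0$ and $k+mN<0$, and the normalization of the Fourier series. As a sanity check, I will compare the resulting formula against the $N=2$ case, where $D=\begin{pmatrix}F(1)&F(-1)\\F(-1)&F(1)\end{pmatrix}$ and the determinant is directly positive because $F(1)>F(-1)$.
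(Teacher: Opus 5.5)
Your proposal is correct and follows the paper's proof essentially step for step: circulant diagonalization, the split $F=\kappa/r+(1-\kappa/r)H$ with $H=1/(1+r\cos\theta)$, its Fourier series and aliasing, the same-sign argument for even $N$, positivity of the constant eigenvalue, and the row-selector upper bound $D=D\cdot I$; your explicit check that $F>0$ (via $\kappa<1$) and your closed form for the aliased sum fill in details the paper leaves implicit. One correction to your aside on odd $N$: the same bookkeeping gives $(-1)^k(s^k-s^{N-k})/(1+s^N)$, which is still nonzero because $k\neq N-k$ and $0<s<1$, so no cancellation occurs and the full-rank conclusion also holds for odd $N$; evenness is used only in the same-sign shortcut, not needed for the result itself.
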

\begin{proof}
Write \(D(\theta)=\kappa/r+(1-\kappa/r)H(\theta)\), where
\begin{align}
 H(\theta)&=\frac1{1+r\cos\theta}\\
 &=\frac1{\sqrt{1-r^2}}
       \sum_{n\in\mathbb Z}(-t)^{|n|}e^{in\theta},\nonumber\\
 t&=\frac{1-\sqrt{1-r^2}}r\in(0,1).
\end{align}
The sampled matrix is circulant. Each nonconstant Fourier eigenvalue,
at \(j=1,\ldots,N-1\), is proportional to
\begin{equation}
 (1-\kappa/r)\sum_{\ell\in\mathbb Z}(-t)^{|j+\ell N|}.
\end{equation}
For even \(N\), every summand has the same sign \((-1)^j\), so the
sum is nonzero. Also \(1-\kappa/r\ne0\). The constant eigenvalue is
the positive row sum. Thus the ordinary rank is \(N\), lower-bounding
the nonnegative rank. The row-selector representation
\(u_{im}=\delta_{im}\), \(v_{mj}=D_{mj}\) supplies the matching
upper bound.
\end{proof}

For \(N=8\), the table has smallest singular value
\(0.15520062120836023\); its singular spectrum is shown in
Fig.~\ref{fig:e2-singular}. This case corresponds to four nominal axes
and both external outcomes, giving eight effective branch axes. The
row-selector upper bound contains a complete external branch label; it
is an algebraic representation, not a physical local-message mechanism.
The theorem is restricted to branch-separable corrections for the
specified raw junction. It gives no bound for arbitrary angle-dependent
corrections, alternative junctions, unrestricted fields, or additional
dynamical variables.

\begin{figure}[tbp]
\centering
\includegraphics[width=\columnwidth]{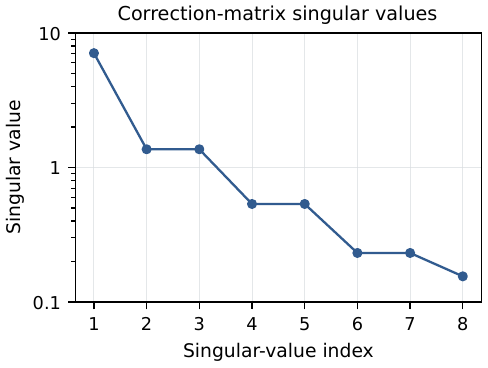}
\caption{Singular values of the eight-branch correction matrix in
Fig.~\ref{fig:e2-correction}. Every value is positive, in agreement
with the full-rank theorem on even uniform grids. The line joins
deterministic evaluations.}
\label{fig:e2-singular}
\end{figure}

\subsection{Archives and adaptive settings}

Add an archived bit \(s=\pm1\) before source \(S_1\) separates. Its
raw factor and paired quantum operation are
\begin{equation}
\begin{aligned}
 L_s(\lambda_1)&=\frac{1+sk\cos2\lambda_1}{2},\\
 M_s&=\sqrt{\frac{I+sk\,Z\otimes Z}{2}},\qquad 0<k<1.
\end{aligned}
 \label{eq:e2-probe}
\end{equation}
The raw measure is now \(L_sR_{abc}^{xz}\); the quantum probability
is evaluated after applying \(M_s\) to \(AB_1\). These are distinct
declared probe extensions. In particular, \(L_s\) is not the Born
response of \(M_s\) on the hidden product spin. Future-policy dependence
is therefore compared within each model, without assuming that their
probe calibrations coincide.

We consider fixed settings \((x,z)=(0,0)\), fixed settings
\((x,z)=(\pi/2,\pi/4)\), and the adaptive choice
\begin{equation}
 (x,z)=
 \begin{cases}
 (0,\pi/4),&s=+1,\\
 (\pi/2,-\pi/4),&s=-1.
 \end{cases}
\end{equation}
All Bell and external outcomes are retained.

\begin{lemma}[Archive marginals]
For fixed settings, the normalized raw model satisfies
\begin{equation}
 P_R(s=+1)=\frac{1+kr\cos2x}{2}.
\end{equation}
Under the stated adaptive policy, \(P_R(s=+1)=1/2\), and the raw
partition changes from \(\pi^{-2}\) to \((1+kr)\pi^{-2}\). The
quantum benchmark gives, under all three policies,
\begin{equation}
 P_Q(s=+1)=\frac{1-kw}{2}.
 \label{eq:e2-archive}
\end{equation}
\end{lemma}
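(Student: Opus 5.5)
The plan is to compute the archive masses directly from the raw measure $L_sR_{abc}^{xz}$ by summing over every retained later outcome, and to handle the quantum side by completeness of the later effects. For the raw model at fixed $(x,z)$, I first sum over $b$ using $\sum_bq_b=1$. This removes the junction and makes the integrand factor as
\begin{equation*}
 L_s(\lambda_1)\,R_A(\lambda_1)\,R_C(\lambda_2)/(2\pi)^2,
\end{equation*}
where $R_A=\sum_aA_{x,a}$ and $R_C=\sum_cC_{z,c}$. This is the key simplification: after summing the Bell outcomes, the preparation factorization is restored.

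The $\lambda_2$ integral then gives $\int R_C\dd\lambda_2=2$, since each wrapped-Cauchy kernel is normalized; hence $z$ drops out entirely. For the $\lambda_1$ integral I use the second circular moment
\begin{equation*}
 \int p_g(\theta-\lambda)e^{2i\lambda}\dd\lambda=e^{-2g}e^{2i\theta},
\end{equation*}
which comes from the Fourier series already used in Lemma~\ref{lem:antipodal}. It applies at $\theta=x$ and $\theta=x+\pi$, and both give $\cos2x$ because of the doubled angle. This yields
\begin{equation*}
 \int L_sR_A\dd\lambda_1=1+skr\cos2x,
\end{equation*}
so the archive mass is $(1+skr\cos2x)/(2\pi^2)$. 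Summing over $s$ gives the partition $\pi^{-2}$, in agreement with the branch-mass lemma, and dividing gives $P_R(s=+1)=(1+kr\cos2x)/2$.

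For the adaptive policy, I substitute $u=f(s)$ in the raw law, as in Lemma~\ref{lem:adaptive}. Since $z$ has already disappeared, only $x$ matters. For $s=+1$ with $x=0$ the mass is $(1+kr)/(2\pi^2)$. For $s=-1$ with $x=\pi/2$ we have $\cos2x=-1$, so the mass is $(1-k r(-1))/(2\pi^2)=(1+kr)/(2\pi^2)$. The two masses are equal, so the global normalization gives $P_R(s=+1)=1/2$ and the raw partition becomes $(1+kr)\pi^{-2}$.

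For the quantum benchmark, the probability of $s$ is $\Tr\bigl(\rho_0\,M_s^\dagger G\,M_s\bigr)$ summed over all outcomes $(a,b,c)$. For each fixed $s$, and for whatever setting is chosen as a function of $s$, the effects satisfy $\sum_aE^g_a(x)=I$, $\sum_b\Pi_b=I$ and $\sum_cE^g_c(z)=I$. The later operations therefore sum to the identity, and the probability reduces to
\begin{equation*}
 \Tr\bigl(\omega_w M_s^2\bigr)=\tfrac12\bigl(1+sk\langle Z\otimes Z\rangle_{\omega_w}\bigr)=\tfrac12(1-skw),
\end{equation*}
using the correlation tensor $-wI_3$. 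The second source contributes its trace, which is one. This argument covers all three policies uniformly, because the adaptive choice is made after $s$ and within each $s$-branch.

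I expect no step to be a real obstacle. The main point requiring care is the raw adaptive case: one must normalize globally after substitution rather than branchwise, which is why the changed partition $(1+kr)\pi^{-2}$ appears. One must also check that the sign flip of $\cos2x$ exactly compensates the sign of $s$.
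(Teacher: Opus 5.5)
Your proof is correct and follows the paper's own argument. Summing the Bell outcomes removes $q_b$, and the second wrapped-Cauchy harmonic gives archive masses $(1+skr\cos2x)/(2\pi^2)$ with $z$ dropping out. The two adaptive masses coincide at $(1+kr)/(2\pi^2)$, and completeness of the $s$-dependent later instruments reduces the quantum probability to $\Tr(\omega_w M_s^2)=(1-skw)/2$. You simply write out the intermediate integrals that the paper leaves implicit.
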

\begin{proof}
Summing the central outcomes removes \(q_b\). Integrating the two
external factors and summing their outcomes leaves the second
wrapped-Cauchy harmonic \(r\cos2x\), yielding the fixed-setting
formula. For the adaptive choice, the two unnormalized archive masses
both equal \((1+kr)/(2\pi^2)\), giving the claimed partition and
equal probabilities. In the quantum model, each selected subsequent
instrument is complete even when its setting depends on \(s\).
Summing its outcomes therefore leaves
\(\Tr(\omega_w M_s^2)=(1-skw)/2\).
\end{proof}

For a full branch \(h=(s,a,b,c)\), the contextual benchmark is
\begin{equation}
 P_Q(h,\dd\lambda_1\dd\lambda_2)
 =n_h\frac{R_h(\dd\lambda_1\dd\lambda_2)}{d_h}.
 \label{eq:e2-embedding}
\end{equation}
Every \(d_h>0\): the outer factors and \(L_s\) are strictly
positive, while each \(q_b\) is positive on a set of positive measure.
Integration of \eqref{eq:e2-embedding} gives \(P_Q(h)=n_h\), so
normalization and \eqref{eq:e2-archive} follow immediately. This
conditional construction embeds the specified quantum probabilities.
It imposes no microscopic mechanism that generates the global factor
\(n_h/d_h\) through local source-field interactions.

The original hidden laws are recovered in a disconnected configuration.
Replace the Bell
junction by the product of the original wrapped-Cauchy analyzers, and
omit the source probe. The apparatus then disconnects, so
\(d_h=d_{h_1}d_{h_2}\) and \(n_h=n_{h_1}n_{h_2}\). For each source,
\begin{align}
 d_{ab}^{xy}&=\frac{\sinh(2g)}{4\pi^2[\cosh(2g)+ab\cos(x-y)]},\\
 n_{ab}^{xy}&=\frac{1-ab\sech(2g)\cos(x-y)}4,\\
 \frac{n_{ab}^{xy}}{d_{ab}^{xy}}&=\frac{2\pi}{Z_{xy}},\\
 Z_{xy}&=\frac{2\sinh(2g)\cosh(2g)}
                   {\pi[\cosh^2(2g)-\cos^2(x-y)]}.
\end{align}
Multiplication by \(2\pi/Z_{xy}\) cancels the reference factor
\(1/(2\pi)\) and recovers the original normalized angle--outcome
law. The disconnected hidden law is therefore its product for the two
sources. No such original-law recovery is claimed at the Bell junction,
whose raw interface is an additional postulate.

Table~\ref{tab:e2-archive} and Fig.~\ref{fig:e2-archive} compare the
archive probabilities under the three future policies.
Numerical sampling used \(g=0.2\), \(k=0.4\), and seed
\(2026091802\). Each of the three policies was sampled under both raw
and contextual laws, with \(100{,}000\) continuous-angle
histories per group: \(600{,}000\) histories in total. Full branch
labels were sampled from the analytic raw masses or direct quantum
matrix traces. Conditional angles were sampled by rejection from the
original positive factors, using
\(q_bL_s\le(1+k)/4\). The sampler generated \(2{,}085{,}587\)
proposals, including unused draws at the end of oversampled batches.

\begin{table}[tbp]
\centering
\footnotesize
\setlength{\tabcolsep}{3pt}
\begin{tabular}{@{}lrr@{}}
\toprule
Policy / law & Exact \(P(s=+)\) & Estimate / 95\% interval\\
\midrule
Fixed Z & 0.634064 & 0.63321\\
Raw & & [0.630218, 0.636192]\\[2pt]
Fixed Z & 0.307499 & 0.30758\\
Contextual & & [0.304727, 0.310448]\\[2pt]
Fixed rotated & 0.365936 & 0.36905\\
Raw & & [0.366064, 0.372046]\\[2pt]
Fixed rotated & 0.307499 & 0.30550\\
Contextual & & [0.302653, 0.308362]\\[2pt]
Adaptive & 0.500000 & 0.50076\\
Raw & & [0.497661, 0.503859]\\[2pt]
Adaptive & 0.307499 & 0.30825\\
Contextual & & [0.305395, 0.311119]\\
\bottomrule
\end{tabular}
\caption{Earlier archived bit, with every central and external outcome
retained. Each two-line entry gives its policy and law, with
\(100{,}000\) histories per entry. Intervals are pointwise 95\% Wilson
intervals. The interval for the raw law with fixed rotated settings
does not contain its exact prediction; no simultaneous coverage is
implied.}
\label{tab:e2-archive}
\end{table}

\begin{figure}[tbp]
\centering
\includegraphics[width=\columnwidth]{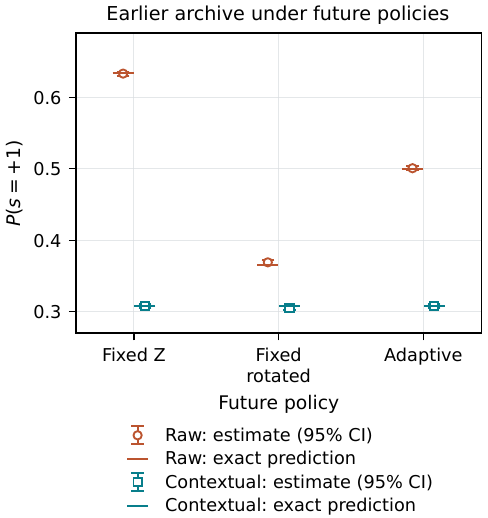}
\caption{Earlier archive probabilities under the three future policies
for \(g=0.2\) and \(k=0.4\). Orange circles and teal squares show
sampled proportions for the raw and contextual laws, respectively;
matching horizontal ticks give their exact predictions. Error bars
are pointwise 95\% Wilson intervals and are narrow on this probability
scale. Each law and policy uses \(100{,}000\) histories. The two laws
are offset horizontally for visibility. All Bell and external outcomes
are retained.}
\label{fig:e2-archive}
\end{figure}

An independent two-dimensional quadrature of the positive raw factors
checked all 32 archive/output branches at unequal axes \(0.37,1.11\).
The largest absolute errors against the harmonic formula were
\(1.77\times10^{-13}\) on a \(128^2\) grid and
\(1.65\times10^{-17}\) on both \(256^2\) and \(512^2\) grids.
Direct \(16\times16\) quantum traces agreed with
\eqref{eq:e2-masses} to \(4.17\times10^{-17}\); full quantum record
normalizations differed from one by at most \(2.23\times10^{-16}\).
These are floating-point diagnostics, not rigorous integration-error
bounds.

To check the conditional-angle sampler, three latent moments were
compared against separate \(512^2\)-node quadrature in every group:
\(\cos2\lambda_1\), \(\sin\lambda_1\sin\lambda_2\), and
\(\cos\lambda_1\cos\lambda_2\). The largest absolute standardized
error was \(1.633\) across these 18 comparisons. The largest
standardized record-cell error was approximately \(3.64\) among 192
reported cells. These correlated pointwise checks are not a simultaneous
coverage statement.

The correction is therefore nonseparable, and its minimum number of
nonnegative separable terms increases with even angular resolution
within the stated ansatz. The raw archive has an exact dependence on
future policy. The contextual history law preserves its archive through
the prescribed quantum normalization, and the numerical results verify
the corresponding sampler. The resulting obstruction and resource
count concern this apparatus extension; a microscopic local realization
and more general network corrections remain outside these claims.

\section{Readable controllers}\label{sec:controller}
A controller can restore the distribution of an earlier outcome by supplying the inverse of a continuation weight. Does this cancellation also protect a record of the controller itself? We examine this question in a Gaussian boundary ensemble whose branch masses can be evaluated exactly. The distinction between an invariant outcome distribution and an invariant joint record leads to explicit access and robustness constraints.

\subsection{Boundary ensemble and joint records}
Let $h\in\{0,1\}$ denote an earlier archived branch and $u\in\{0,1\}$ a later choice between two policies. The continuation weights are
\begin{equation}
\begin{aligned}
 (R_{0,0},R_{1,0})&=(1,\rho),\\
 (R_{0,1},R_{1,1})&=(\rho,1),
 \qquad \rho>1.
\end{aligned}
\end{equation}
Each policy may use the archived value $h$. The numerical benchmark takes $\rho=\coth^2(0.4)$.

The controller has dimensionless coordinates $Q=(Q_1,Q_2)$ and radial coordinate $T=|Q|^2/2$. Its stiffness is $r=R^\alpha$, its bare potential offset is $\zeta\log R$, and an optional radial cutoff restricts $T\le B$. Define the unnormalized history measure by
\begin{equation}\label{eq:controllerraw}
\begin{aligned}
 M_u(h,\dd^2Q)
 &=\frac12 R_{h,u}^{1-\zeta}
   e^{-R_{h,u}^{\alpha}T}\\[-1pt]
 &\hspace{1em}\times\mathbf1_{T\le B}
   \frac{\dd^2Q}{2\pi}.
\end{aligned}
\end{equation}
One global normalization over $h$ and $Q$ determines the probability law. The boundary ensemble and the stiffness interface $r=R^\alpha$ are model assumptions; they do not specify microscopic source--controller dynamics.

The record under consideration is $(h,Q)$, assumed jointly accessible before the later policy is selected. This timing is essential: a controller readout obtained only after that selection need not have a policy-independent distribution. The principal case is $(\alpha,\zeta,B)=(1,0,\infty)$.

\begin{lemma}[Raw and normalized Gaussian factors]
The branch mass obtained from \eqref{eq:controllerraw} is
\begin{equation}
 m_u(h)=\frac12 R_{h,u}^{1-\zeta-\alpha}
 \bigl(1-e^{-R_{h,u}^{\alpha}B}\bigr).
\end{equation}
For $B=\infty$, the normalized conditional law is
\begin{equation}
 Q\mid h,u\sim N\!\left(0,R_{h,u}^{-\alpha}I_2\right).
\end{equation}
A normalized conditional thermal kernel integrates to one and therefore supplies no inverse-stiffness factor to the branch mass.
\end{lemma}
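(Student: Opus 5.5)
The plan is a direct computation in polar-type coordinates, organized in three steps: the branch mass, the normalized conditional law, and the remark about normalized kernels.

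\textbf{Branch mass.} Fix $h$ and $u$, and abbreviate $s=R_{h,u}^{\alpha}>0$. The $Q$-independent prefactor $\tfrac12R_{h,u}^{1-\zeta}$ comes out of the integral. What remains is
\[
I(s,B)=\int_{\R^2}e^{-s|Q|^2/2}\,\mathbf1_{|Q|^2/2\le B}\,\frac{\dd^2Q}{2\pi}.
\]
In polar coordinates, $\dd^2Q=\rho\,\dd\rho\,\dd\phi$. The angular integral cancels the $2\pi$. Substituting $T=\rho^2/2$ gives $\rho\,\dd\rho=\dd T$, so the radial measure becomes simply $\dd T$ on $[0,\infty)$. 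Then $I(s,B)=\int_0^B e^{-sT}\dd T=(1-e^{-sB})/s$. Multiplying by the prefactor yields $m_u(h)=\tfrac12R_{h,u}^{1-\zeta-\alpha}(1-e^{-R_{h,u}^\alpha B})$. For $B=\infty$ the factor in parentheses is $1$, by the convention $e^{-\infty}=0$.

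\textbf{Conditional law.} Following the disintegration $R_u(h,\dd z)=r_u(h)Q_u(\dd z\mid h)$ of Sec.~\ref{sec:framework}, we set $B=\infty$ and divide $M_u(h,\dd^2Q)$ by $m_u(h)$. The result is the density $\frac{s}{2\pi}e^{-s|Q|^2/2}$. This is exactly the density of $N(0,s^{-1}I_2)=N(0,R_{h,u}^{-\alpha}I_2)$. The global normalization over $h$ only rescales the branch weights, as in Lemma~\ref{lem:normalization}, so it leaves this conditional law unchanged.

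\textbf{Normalized kernels.} Suppose the raw measure uses a normalized kernel $K_{h,u}(\dd^2Q)$ with total mass one, for instance the Gaussian density above with $s$ built in. Then the branch mass is $\tfrac12R_{h,u}^{1-\zeta}$ times $1$. The factor $R_{h,u}^{-\alpha}$ never appears, so there is no inverse-stiffness factor in the branch mass.

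None of these steps is a real obstacle. The only point needing care is the Jacobian $\dd^2Q/(2\pi)\mapsto\dd T$, which is what makes the truncated mass come out as an elementary exponential integral. One should also check that every quantity is finite and positive: this holds because $R_{h,u}\in\{1,\rho\}$ with $\rho>1$, so $s>0$, and because $B>0$.
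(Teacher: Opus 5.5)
Your proposal is correct and takes the same route as the paper's proof. You integrate out the polar angle so that $\dd^2Q/(2\pi)$ becomes $\dd T$, evaluate $\int_0^B e^{-sT}\dd T=(1-e^{-sB})/s$, and divide by this mass to read off the conditional Gaussian; the only cosmetic issue is that your polar radius $\rho$ clashes with the paper's weight parameter $\rho>1$, so use a different letter.
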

\begin{proof}
After integrating over the polar angle, the measure $\dd^2Q/(2\pi)$ becomes $\dd T$. Hence
\begin{equation}
 \int_0^B e^{-rT}\dd T=\frac{1-e^{-rB}}{r}.
\end{equation}
Multiplication by the remaining factors gives $m_u(h)$. Conditional normalization removes this integral from the controller law.
\end{proof}

For the principal case, the Gaussian integral is $1/R_{h,u}$, so each corrected branch has probability $1/2$. Preservation of this marginal is insufficient for preservation of the complete record: at fixed supported $h$, the normalized conditional law of $Q$ must also be independent of $u$, as in Theorem~\ref{thm:conditional}.

\begin{theorem}[Protected marginals and an unprotected joint record]\label{thm:controller}
At $(\alpha,\zeta,B)=(1,0,\infty)$, both $P_u(h)=1/2$ and the marginal law of $Q$ are independent of $u$. Their joint law nevertheless satisfies
\begin{equation}
\begin{aligned}
 D&:=\TV\!\left(P_0^{H,Q},P_1^{H,Q}\right)\\
  &=(\rho-1)\rho^{-\rho/(\rho-1)}>0.
\end{aligned}
\end{equation}
The binary radial readout $\mathbf1_{T>T_*}$, together with $h$, retains this entire distinction, where
\begin{equation}
 T_* =\frac{\log\rho}{\rho-1}.
\end{equation}
For equal prior probabilities of the two policies, the optimal discrimination success is $(1+D)/2$.
\end{theorem}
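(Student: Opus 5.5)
Set $(\alpha,\zeta,B)=(1,0,\infty)$. The branch masses are $m_u(h)=\tfrac12R_{h,u}\cdot R_{h,u}^{-1}=\tfrac12$, so the global normalization is one and $P_u(h)=1/2$ for both policies. The joint law is therefore $P_u(h,\dd^2Q)=\tfrac12\,N(0,R_{h,u}^{-1}I_2)$. The two branches swap their stiffnesses between the policies:
\begin{itemize}
\item under $u=0$, branch $h=0$ has stiffness $1$ and branch $h=1$ has stiffness $\rho$;
\item under $u=1$, branch $h=0$ has stiffness $\rho$ and branch $h=1$ has stiffness $1$.
\end{itemize}
In both cases the $Q$-marginal is the mixture $\tfrac12N(0,I_2)+\tfrac12N(0,\rho^{-1}I_2)$, so it is also independent of $u$.

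For the joint law, apply Theorem~\ref{thm:conditional} with $q(h)=1/2$:
\[
D=\tfrac12\sum_h\TV\bigl(N(0,R_{h,0}^{-1}I_2),N(0,R_{h,1}^{-1}I_2)\bigr)=\TV\bigl(N(0,I_2),N(0,\rho^{-1}I_2)\bigr),
\]
since both terms are the same distance with the roles swapped.

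Both conditional laws depend on $Q$ only through $T=|Q|^2/2$. Total variation is unchanged under a sufficient statistic (here the likelihood ratio is a function of $T$). So we may pass to the law of $T$. Using $\dd^2Q/(2\pi)=\dd T$, stiffness $r$ gives $T\sim\mathrm{Exp}(r)$ with density $re^{-rT}$.

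The densities $e^{-T}$ and $\rho e^{-\rho T}$ cross exactly once, where $\log\rho=(\rho-1)T$, that is at $T_*=\log\rho/(\rho-1)$. For $T>T_*$ the rate-$1$ density is the larger. Hence
\[
\TV=P_1(T>T_*)-P_\rho(T>T_*)=e^{-T_*}-e^{-\rho T_*}.
\]
Now $e^{-T_*}=\rho^{-1/(\rho-1)}$ and $e^{-\rho T_*}=\rho^{-\rho/(\rho-1)}$. Factoring out $\rho^{-\rho/(\rho-1)}$ gives $e^{-T_*}-e^{-\rho T_*}=\rho^{-\rho/(\rho-1)}(\rho-1)$, which is positive because $\rho>1$.

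Because the total variation is attained on the event $\{T>T_*\}$ in each branch, the coarse record $(h,\mathbf1_{T>T_*})$ has the same distance between the two policies. Equation~\eqref{eq:jointtv} applies again to this coarse record, so it retains all of $D$. Finally, with equal priors on the two policies, the Helstrom/Neyman--Pearson bound for classical hypotheses gives optimal success probability $\tfrac12(1+\TV)=(1+D)/2$.

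The only step needing care is justifying that the total variation reduces from $Q$ to $T$ and that the crossing is unique. The first follows because the likelihood ratio depends only on $T$. The second follows because the log of the density ratio, $(\rho-1)T-\log\rho$, is linear in $T$. The rest is direct computation.
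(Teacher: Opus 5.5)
Your proof is correct and follows essentially the same route as the paper. Equal branch masses and swapped conditional Gaussians give the invariant marginals, and the joint distance reduces via the disintegration identity and radial symmetry to $\TV(\operatorname{Exp}(1),\operatorname{Exp}(\rho))$ at the single crossing $T_*$; your appeal to sufficiency and the equal-prior Bayes bound is the same argument as the paper's uniform-angle remark and its explicit optimal event $\{h=0,T>T_*\}\cup\{h=1,T\le T_*\}$.
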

\begin{proof}
The two policies interchange the conditional Gaussian laws while leaving their equal mixture, the marginal law of $Q$, unchanged. Conditional on $h$, the angular coordinate is uniform under both policies. Their distinction therefore reduces to the radial densities $e^{-T}$ and $\rho e^{-\rho T}$, which cross at $T_*$. Integration gives
\begin{equation}
\begin{aligned}
 \TV\!\left(\operatorname{Exp}(1),
                  \operatorname{Exp}(\rho)\right)
 &=e^{-T_*}-e^{-\rho T_*}\\
 &=D.
\end{aligned}
\end{equation}
The equal branch probabilities give the same total variation for the joint record. The optimal event for identifying policy zero is
\begin{equation}
 \{h=0,T>T_*\}\ \cup\ \{h=1,T\le T_*\}.
\end{equation}
Binning at the density crossing preserves the sign of the density difference and hence its total variation.
\end{proof}

The access condition is substantive. An angular-only readout, such as $\operatorname{sgn}Q_1$, is a protected alternative: isotropy gives the same fair sign conditional on every $h,u$. Thus informative radial access can reveal the policy through correlations with $h$, even though neither separate marginal reveals it. Figure~\ref{fig:controller-radial} shows the conditional radial densities and their crossing.

\begin{figure}[tbp]
\centering
\includegraphics[width=\columnwidth]{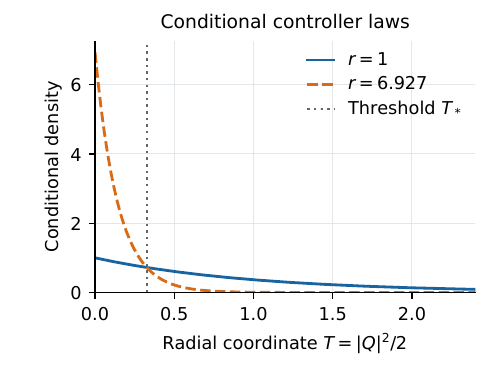}
\caption{Conditional radial densities of the Gaussian controller in the principal ensemble, with $\rho=\coth^2(0.4)$. The solid blue and dashed orange curves have rates $1$ and $\rho$, respectively; the dotted gray line marks their crossing at $T_*=\log\rho/(\rho-1)$. This threshold defines an optimal binary radial readout. Together with the branch record $h$, this readout retains the full joint-record total variation $D$, despite policy-independent separate marginals.}
\label{fig:controller-radial}
\end{figure}

\subsection{Noise, distribution changes, and robustness}
\begin{proposition}[Finite Gaussian readout noise]
Let the accessible coordinate be $Z=Q+\eta$, where $\eta\sim N(0,sI_2)$ is independent of $Q,h,u$, and retain the principal raw ensemble. Define
\begin{equation}
\begin{aligned}
 k(s)&=\frac{1+s}{\rho^{-1}+s},\\
 G(k)&=(k-1)k^{-k/(k-1)},\qquad k>1,
\end{aligned}
\end{equation}
with $G(1)=0$. The joint-record total variation is $D(s)=G(k(s))$. For $\rho>1$, it is positive at every finite $s$. For $0<\delta<D(0)$, the minimum variance within this readout family is
\begin{equation}
 s_{\min}=\frac{1-k_\delta/\rho}{k_\delta-1},
 \qquad G(k_\delta)=\delta,
\end{equation}
where $1<k_\delta<\rho$.
\end{proposition}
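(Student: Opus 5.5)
Under the principal raw ensemble, each branch has probability $1/2$ (as in Theorem~\ref{thm:controller}), and conditional on $(h,u)$ we have $Q\sim N(0,R_{h,u}^{-1}I_2)$. Adding independent noise gives $Z\mid h,u\sim N(0,(R_{h,u}^{-1}+s)I_2)$. The two policies again interchange the two conditional laws at each $h$. One branch therefore carries variance $\sigma_1^2=1+s$ and the other $\sigma_\rho^2=\rho^{-1}+s$, and which branch carries which is swapped between $u=0$ and $u=1$. By Theorem~\ref{thm:conditional},
\[
D(s)=\tfrac12\TV\bigl(N(0,\sigma_1^2 I_2),N(0,\sigma_\rho^2 I_2)\bigr)+\tfrac12\TV\bigl(N(0,\sigma_\rho^2 I_2),N(0,\sigma_1^2 I_2)\bigr),
\]
so $D(s)$ is the total variation between two isotropic planar Gaussians.

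The plan is to reduce this distance to exponential laws exactly as in the proof of Theorem~\ref{thm:controller}. The angle is uniform under both laws, so only the radial variable $T=|Z|^2/2$ matters. Under variance $\sigma^2$, $T$ is exponential with rate $1/\sigma^2$. The total variation is invariant under the common rescaling $T\mapsto T/\sigma_1^2$, which turns the pair into $\operatorname{Exp}(1)$ and $\operatorname{Exp}(k)$ with
\[
k=\sigma_1^2/\sigma_\rho^2=\frac{1+s}{\rho^{-1}+s}=k(s).
\]
The computation of Theorem~\ref{thm:controller}, with $\rho$ replaced by $k$, then gives a crossing at $\log k/(k-1)$ and $\TV=e^{-T_*}-e^{-kT_*}=G(k)$. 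This last identity is a short algebraic check. Hence $D(s)=G(k(s))$, and the limit $s=0$ recovers $D(0)=G(\rho)$.

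Positivity follows once we note that $k(s)>1$ for every finite $s$, because $1>\rho^{-1}$. We also need $G(k)>0$ for $k>1$. This holds because $(k-1)>0$ and the power factor is positive, or more directly because two distinct exponential laws have positive total variation.

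For $s_{\min}$ there are two monotonicity facts to establish. First, $k(s)$ is strictly decreasing from $\rho$ at $s=0$ to $1$ as $s\to\infty$. Indeed,
\[
\frac{dk}{ds}=\frac{\rho^{-1}-1}{(\rho^{-1}+s)^2}<0.
\]
Second, $G$ is continuous and strictly increasing on $(1,\infty)$, with $G(k)\to0$ as $k\downarrow1$. This second fact is the main obstacle. The cleanest route is to write $\log G=\log(k-1)-\tfrac{k}{k-1}\log k$ and differentiate, which gives
\[
\frac{d}{dk}\log G=\frac{\log k}{(k-1)^2}>0.
\]
Alternatively, one can argue that the exponential family is ordered by a monotone likelihood ratio, so the total variation to the fixed reference $\operatorname{Exp}(1)$ increases with $k$. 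The limit at $k\downarrow1$ follows from $k^{-k/(k-1)}\to e^{-1}$ multiplied by $k-1\to0$. Consequently $D(s)$ is continuous and strictly decreasing from $D(0)$ toward $0$.

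For $0<\delta<D(0)$, the intermediate value theorem together with strict monotonicity yields a unique $k_\delta\in(1,\rho)$ with $G(k_\delta)=\delta$. The set $\{s: D(s)\le\delta\}$ is then $\{s: k(s)\le k_\delta\}$, and its minimum is the solution of $k(s)=k_\delta$. Solving
\[
1+s=k_\delta(\rho^{-1}+s)
\]
gives $s_{\min}=(1-k_\delta/\rho)/(k_\delta-1)$, which is positive because $k_\delta<\rho$.
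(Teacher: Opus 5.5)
Your proposal is correct and follows essentially the same route as the paper. Both arguments form the conditional covariances $1+s$ and $\rho^{-1}+s$, rescale to reduce to Theorem~\ref{thm:controller} with ratio $k(s)$, and obtain monotonicity from $\dd\log G/\dd k=\log k/(k-1)^2>0$ together with the decrease of $k(s)$ from $\rho$ to one. You also make explicit the limit $G(k)\to0$ as $k\downarrow1$ and the inversion of $k(s)=k_\delta$ for $s_{\min}$, both of which the paper leaves implicit.
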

\begin{proof}
The conditional covariance scalars become $1+s$ and $\rho^{-1}+s$. A common rescaling reduces their total variation to Theorem~\ref{thm:controller} with variance ratio $k(s)$. For $k>1$,
\begin{equation}
 \frac{\dd\log G}{\dd k}
 =\frac{\log k}{(k-1)^2}>0.
\end{equation}
The ratio $k(s)$ decreases continuously from $\rho$ to one as $s\to\infty$. Monotonicity gives the unique variance threshold and positivity at every finite $s$.
\end{proof}
This bound concerns a coordinate-resolved additive Gaussian channel. It does not apply to angular-only readout, nor does it assign a disturbance or energetic cost to a physical measurement. The raw controller ensemble is unchanged by the stipulated readout noise. The remaining joint-record distinction is shown in Fig.~\ref{fig:controller-noise}.

\begin{figure}[tbp]
\centering
\includegraphics[width=\columnwidth]{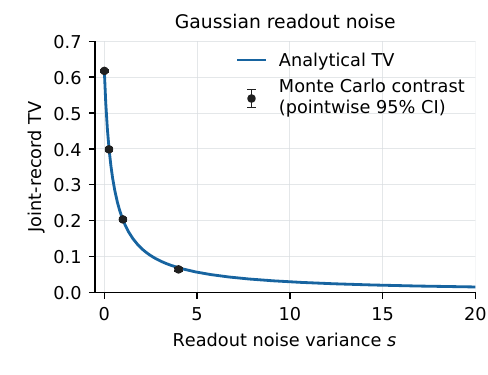}
\caption{Joint-record total variation under coordinate-resolved additive Gaussian readout noise of variance $s$ per coordinate. The solid blue curve is the exact prediction $D(s)=G(k(s))$. Black circles show the Monte Carlo difference of probabilities for the analytically fixed optimal classification event; error bars give pointwise approximate $95\%$ intervals. These event contrasts estimate $D(s)$ without fitting a classifier to the samples. Every finite variance leaves a positive distinction in this readout family.}
\label{fig:controller-noise}
\end{figure}

\begin{proposition}[Unrestricted conditional-law repair]
Retain equal branch probabilities and the swapped-law construction. If each of the two conditional controller laws may change by at most $\epsilon$ in total variation, the least residual joint-record leakage is
\begin{equation}
 \max(0,D-2\epsilon).
\end{equation}
Exact protection requires and permits $\epsilon=D/2$.
\end{proposition}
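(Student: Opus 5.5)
The plan is to reduce everything to Theorem~\ref{thm:conditional}. With equal branch probabilities $q(0)=q(1)=1/2$ held fixed, the joint-record leakage is
\[
\tfrac12\TV(\widehat Q_0(\cdot\mid 0),\widehat Q_1(\cdot\mid 0))+\tfrac12\TV(\widehat Q_0(\cdot\mid 1),\widehat Q_1(\cdot\mid 1)).
\]
In the swapped-law construction, write $\mu=N(0,I_2)$ and $\nu=N(0,\rho^{-1}I_2)$. Branch $h=0$ has conditional laws $(\mu,\nu)$ under $u=(0,1)$, and branch $h=1$ has $(\nu,\mu)$. Each term in the unrepaired sum equals $\TV(\mu,\nu)=D$ by Theorem~\ref{thm:controller}.

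I read ``each of the two conditional controller laws may change by at most $\epsilon$'' as the statement that, for each $h$, the pair of policy-conditional laws may be replaced by $(\mu',\nu')$ with $\TV(\mu,\mu')\le\epsilon$ and $\TV(\nu,\nu')\le\epsilon$.

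For the lower bound, the triangle inequality gives $\TV(\mu',\nu')\ge D-2\epsilon$ in each branch. Since total variation is nonnegative, the residual is at least
\[
\tfrac12\max(0,D-2\epsilon)+\tfrac12\max(0,D-2\epsilon)=\max(0,D-2\epsilon).
\]

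For achievability I will use explicit interpolation. Set $t=\min(1,\epsilon/D)$ and define
\[
\mu'=(1-t)\mu+t\,\tfrac{\mu+\nu}2,\qquad \nu'=(1-t)\nu+t\,\tfrac{\mu+\nu}2.
\]
Then $\TV(\mu,\mu')=tD/2$. To get the full budget I will instead push both laws toward each other along the segment: $\mu'=\mu+s(\nu-\mu)$ and $\nu'=\nu-s(\nu-\mu)$ with $s\in[0,1/2]$. This gives $\TV(\mu,\mu')=sD$, $\TV(\nu,\nu')=sD$, and $\TV(\mu',\nu')=(1-2s)D$. Choosing $s=\min(1/2,\epsilon/D)$ makes the residual exactly $\max(0,D-2\epsilon)$, which matches the lower bound. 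I will apply the same convex combination symmetrically in branch $h=1$. Mixtures of probability densities remain probability densities, so the repaired kernels are admissible. The branch marginals stay at $1/2$ because only conditional laws are altered. The controller marginal stays equal to $(\mu+\nu)/2$ under both policies, since the two branches are exchanged.

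Exact protection follows directly. Zero residual requires $\epsilon\ge D/2$ by the lower bound. With $\epsilon=D/2$ and $s=1/2$, both conditional laws become $(\mu+\nu)/2$, so $\widehat Q_u(\cdot\mid h)$ is independent of $u$, and Theorem~\ref{thm:conditional} gives protection.

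The main obstacle is interpretive rather than technical: fixing precisely what ``unrestricted repair'' permits. In particular, the change must be measured per conditional law (not on the joint record), and the branch weights must stay fixed. I will state this reading explicitly and note that because the optimizer is a plain convex combination, no Gaussian structure is needed beyond the value of $D$.
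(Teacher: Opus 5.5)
Your proposal is correct and matches the paper's argument: the triangle inequality gives the per-branch lower bound $D-2\epsilon$, which Theorem~\ref{thm:conditional} averages over the equal branch weights. Mixing each law with weight $s\le 1/2$ of the other attains this bound and reaches the common midpoint at $s=1/2$, that is, at $\epsilon=D/2$. Your first interpolation with $t=\min(1,\epsilon/D)$ runs along the same segment but spends only $\epsilon/2$ of the budget, so drop it and keep the $s$-parametrization.
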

\begin{proof}
The triangle inequality bounds the separation of the replacement laws from below by $D-2\epsilon$. For $0\le t\le1/2$, replace each law by its mixture with weight $t$ of the other law. Each changes by $tD$, while their separation becomes $(1-2t)D$. At $t=1/2$, both equal their common midpoint.
\end{proof}
The attainable replacements need not remain Gaussian. The proposition specifies the minimum change of conditional distributions without selecting an interaction that produces it. Figure~\ref{fig:controller-law-change} displays the attainable tradeoff up to exact protection.

\begin{figure}[tbp]
\centering
\includegraphics[width=\columnwidth]{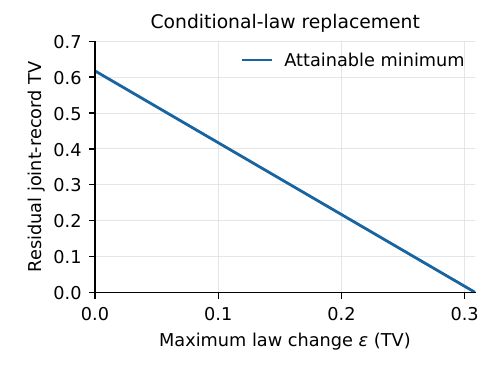}
\caption{Sharp minimum residual joint-record total variation when each conditional controller law can change by at most $\epsilon$ in total variation. Mixing the two original laws attains $D-2\epsilon$ for $0\le\epsilon\le D/2$; their common midpoint gives exact protection at $D/2=0.3086349355$. The replacements are unrestricted and need not be Gaussian.}
\label{fig:controller-law-change}
\end{figure}

\begin{theorem}[Stiffness and bare-potential classification]
For $\rho>1$, $B=\infty$, and exact coordinate readout, branch protection in \eqref{eq:controllerraw} holds if and only if $\alpha+\zeta=1$. Full $(h,Q)$ protection holds if and only if $(\alpha,\zeta)=(0,1)$.
\end{theorem}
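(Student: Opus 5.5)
With $B=\infty$ the earlier lemma gives every branch mass in closed form, $m_u(h)=\tfrac12R_{h,u}^{1-\zeta-\alpha}$. I will derive branch protection directly from these masses. Full $(h,Q)$ protection will then follow from Theorem~\ref{thm:conditional} together with the conditional Gaussian law $N(0,R_{h,u}^{-\alpha}I_2)$.

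\emph{Branch protection.} Set $\beta=1-\zeta-\alpha$. The continuation weights give policy $0$ the masses $(\tfrac12,\tfrac12\rho^\beta)$ and policy $1$ the masses $(\tfrac12\rho^\beta,\tfrac12)$. After one global normalization, the branch probabilities are
\begin{equation}
 P_0(h=0)=\frac{1}{1+\rho^\beta},\qquad P_1(h=0)=\frac{\rho^\beta}{1+\rho^\beta}.
\end{equation}
These coincide if and only if $\rho^\beta=1$. Because $\rho>1$, this holds if and only if $\beta=0$, that is, $\alpha+\zeta=1$. When $\beta=0$, both branches have probability $1/2$ under either policy, so the branch law is protected. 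Conversely, any $\beta\neq0$ makes the two branch laws differ.

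\emph{Full record.} Full protection includes branch protection, so $\alpha+\zeta=1$ must hold and the common branch law is $q(h)=1/2>0$ for both $h$. By Theorem~\ref{thm:conditional}, the joint record $(h,Q)$ with exact readout is then protected if and only if, for each $h$, the conditional law of $Q$ does not depend on $u$. For $h=0$ these laws are $N(0,I_2)$ under $u=0$ and $N(0,\rho^{-\alpha}I_2)$ under $u=1$; for $h=1$ the roles of the two laws are exchanged. Centered Gaussians are equal exactly when their covariances are equal, so equality requires $\rho^{-\alpha}=1$, and hence $\alpha=0$ since $\rho>1$. Combined with $\alpha+\zeta=1$, this gives $(\alpha,\zeta)=(0,1)$. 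Conversely, at $(0,1)$ the raw measure reduces to $\tfrac12e^{-T}\,\dd^2Q/(2\pi)$ for every $h$ and $u$, so the entire history law is independent of the policy.

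I expect no real obstacle here. The one point to handle explicitly is that Theorem~\ref{thm:conditional} assumes a fixed corrected branch law, so branch protection must be established first; that is why the ``only if'' direction for the full record begins by invoking the first part. The case $\alpha=0$ with $\zeta\neq1$ needs no separate treatment, because it already fails branch protection.
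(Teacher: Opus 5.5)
Your proposal is correct and follows the paper's proof. Branch protection reduces to the mass ratio $\rho^{1-\zeta-\alpha}$ equalling one; full protection then reduces via Theorem~\ref{thm:conditional} to equal conditional Gaussian covariances, forcing $\alpha=0$; and the converse holds because $(\alpha,\zeta)=(0,1)$ removes all $R$-dependence from the raw measure. Your explicit comparison of $1/(1+\rho^\beta)$ with $\rho^\beta/(1+\rho^\beta)$ simply spells out the paper's one-line mass-ratio step.
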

\begin{proof}
The branch mass ratio is $\rho^{1-\zeta-\alpha}$, which equals one precisely when $\alpha+\zeta=1$. At equal branch masses, complete-record protection requires equal conditional Gaussian covariances. Since $\rho>1$, this forces $\alpha=0$ and hence $\zeta=1$. Conversely, these values remove every dependence on $R$ from the raw measure.
\end{proof}
The stiffness-only family $\zeta=0$ therefore contains no exactly protected full-record model. The protected point instead supplies the bare factor $R^{-1}$ with constant stiffness. This identifies the required potential offset but leaves its physical origin unspecified. Exact protection occupies no open neighborhood in this two-parameter family; the conclusion concerns the independent parameter variations admitted here. Figure~\ref{fig:controller-stiffness} compares branch and joint-record total variation along the stiffness-only family.

\begin{figure}[tbp]
\centering
\includegraphics[width=\columnwidth]{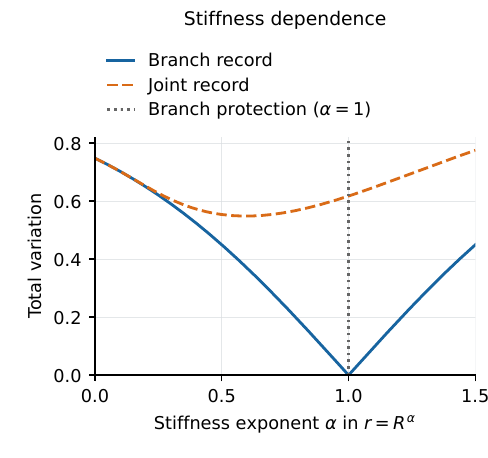}
\caption{Analytical branch (solid blue) and joint-record (dashed orange) total variation as the stiffness exponent $\alpha$ varies with zero bare offset ($\zeta=0$) and no radial cutoff. At $\alpha=1$ (dotted gray line), the branch distribution is protected while the full record remains policy dependent. Full-record protection requires the distinct parameter choice $(\alpha,\zeta)=(0,1)$.}
\label{fig:controller-stiffness}
\end{figure}

A finite radial range also changes the branch normalization. At $\alpha=1$ and $\zeta=0$, the two branch masses are proportional to $1-e^{-B}$ and $1-e^{-\rho B}$. The resulting branch total variation is
\begin{equation}
 \frac{e^{-B}-e^{-\rho B}}
      {2-e^{-B}-e^{-\rho B}}>0,
 \qquad 0<B<\infty.
\end{equation}
It vanishes as $B\to\infty$, while the joint-record total variation approaches the nonzero value in Theorem~\ref{thm:controller}.

The Monte Carlo calculations use $2{,}300{,}000$ controller histories across eleven configurations and both policies. The principal configuration has $150{,}000$ histories per policy; each remaining configuration has $100{,}000$. Conditional radii are sampled continuously by inverse transform, with uniform polar angle and independent random streams for the two policies. The retained variables are $h$, the controller coordinates and radius, and the observed coordinates when readout noise is present. A further $553$ deterministic parameter values cover stiffness, radial range, readout noise, and bare potential.

For $\rho=6.9270683784$, the exact joint total variation is $D=0.6172698709$, while both separate marginal total variations are zero. The sampled difference of probabilities for the analytically fixed optimal event is $0.6173733\pm0.0028150$, where the uncertainty is the pointwise approximate $95\%$ interval half-width. The threshold is $T_*=0.3265419878$, the equal-prior discrimination success is $0.8086349355$, and the minimum unrestricted conditional-law change for exact protection is $D/2=0.3086349355$.

Reducing the joint total variation to $0.01$ using passive Gaussian readout noise requires $s\ge30.90598036$ per coordinate in the stated dimensionless units. For a finite cutoff $B=1$, the branch and joint total variations are $0.2249338741$ and $0.5319232804$, respectively. Replacing the raw Gaussian factor by a normalized thermal conditional preparation leaves the branch weights proportional to $R$; the corresponding branch and joint total variations are $0.7476999182$ and $0.8600509558$.

These results distinguish a controller's scalar contribution to branch normalization from the information retained in its accessible coordinates. A source--controller dynamics that realizes the correction must also determine which controller observables can be read before later choices and how such readout changes the joint history law.

\section{Variable widths and memory}
\label{sec:memory}
\label{sec:experiment4}

The access and composition constraints in Sections~\ref{sec:repeated}
and~\ref{sec:network} concern records produced by specified apparatus rules.
An operational completion must also specify how those rules respond to changes
in propagation width and to apparatus reuse. We consider the source family
\begin{equation}
 \begin{aligned}
 &P_0(a,b\mid x,y,g,h)\\
 &\qquad=\frac{1-ab\sech(g+h)\cos(x-y)}4,
 \end{aligned}
 \label{eq:e4-target}
\end{equation}
where $g,h>0$ are independently variable local widths and $a,b\in\{-1,+1\}$.
A fixed singlet preparation with correlated classical noise reproduces this
family and defines a temporal witness of controller retention. Its resource
requirements lead to a conditional obstruction for finite total shared
dimension. This construction complements the controller analysis in
Section~\ref{sec:controller} at the operational level: its quantum instrument
is specified explicitly, while a derivation from the original retrocausal
hidden-angle dynamics remains outside its scope.

\subsection{Operational completion and reset}

Write $n_x=(\sin x,0,\cos x)$ and
$\Pi_a(x)=(I+a n_x\cdot\sigma)/2$ for the nominal spin projector.

\begin{postulate}[Shared controller and local re-preparation]
\label{post:e4-model}
Each independent trial prepares a singlet and a classical variable $\Omega$,
shared between the wings and sampled independently of the externally chosen
source axes, widths, and adaptive policy, with density
\begin{equation}
 p(\omega)=\frac{1}{2\cosh(\pi\omega/2)}.
 \label{eq:e4-density}
\end{equation}
Alice measures projectively at $x+g\Omega$, Bob at $y-h\Omega$.
Following its outcome, each detector re-prepares the corresponding
\emph{nominal} eigenstate, $\Pi_a(x)$ or $\Pi_b(y)$.
In the fresh extension, subsequent local uses draw independent controllers,
independent also of the incoming spin and complete earlier record. In the
retained extension, the controller is reused throughout the local sequence.
Later controls may depend on archived outcomes and hence correlate with the retained controller.
Independent source trials always refresh the source and its initial controller.
\end{postulate}

The initial paired use has a common controller in both extensions; independent
local refreshes begin afterwards. The continuous controller is an explicit
shared resource, represented by a hybrid classical--quantum ensemble. The
operational-completion claim is that fresh local resets reproduce both
\eqref{eq:e4-target} and the nominal measure-and-prepare continuation instrument
on every finite adaptive history formed from these operations.

\begin{lemma}[Controller transform]
\label{lem:e4-transform}
The distribution~\eqref{eq:e4-density} has
\begin{equation}
 \begin{aligned}
 F(\omega)&=\frac2\pi\arctan e^{\pi\omega/2},\\
 \E e^{it\Omega}&=\sech t.
 \end{aligned}
 \label{eq:e4-transform}
\end{equation}
It can be sampled by
$\Omega=(2/\pi)\log\tan(\pi U/2)$ with $U$ uniform on $(0,1)$.
\end{lemma}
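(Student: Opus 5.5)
The plan is to verify the three claims directly from the density $p(\omega)=1/[2\cosh(\pi\omega/2)]$, proving them in the order CDF, inverse sampler, characteristic function.

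For the CDF, write $\cosh(\pi\omega/2)=(e^{\pi\omega/2}+e^{-\pi\omega/2})/2$. Then
\begin{equation*}
 p(\omega)=\frac{e^{\pi\omega/2}}{1+e^{\pi\omega}}.
\end{equation*}
Substituting $v=e^{\pi\omega/2}$ gives $\dd v=(\pi/2)v\,\dd\omega$. The integrand becomes $(2/\pi)\,\dd v/(1+v^2)$, with antiderivative $(2/\pi)\arctan e^{\pi\omega/2}$. This expression tends to $0$ as $\omega\to-\infty$ and to $1$ as $\omega\to+\infty$. Differentiating it therefore recovers $p$ and identifies it as $F$; as a by-product, $p$ is a normalized density.

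The sampling formula then follows by inverting $F$. Setting $U=F(\Omega)$ gives $\tan(\pi U/2)=e^{\pi\Omega/2}$, so $\Omega=(2/\pi)\log\tan(\pi U/2)$. The map $U\mapsto\Omega$ is continuous and strictly increasing from $(0,1)$ onto $\R$, so the standard inverse-transform argument shows that $\Omega$ has CDF $F$.

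For the characteristic function, rescale by $\omega=2s/\pi$. This gives
\begin{equation*}
 \E e^{it\Omega}=\frac1\pi\int_{\R}\frac{e^{i(2t/\pi)s}}{\cosh s}\dd s.
\end{equation*}
It then suffices to invoke the classical Fourier pair $\int_{\R}e^{iks}\sech s\,\dd s=\pi\sech(\pi k/2)$. With $k=2t/\pi$, this yields $\sech t$ exactly.

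The only genuinely nontrivial step is this Fourier pair. If it is not taken as standard, I would prove it by contour integration over the rectangle with vertices $\pm L$ and $\pm L+i\pi$. Since $\cosh(s+i\pi)=-\cosh s$, the top edge contributes $e^{-k\pi}$ times the real-line integral. The vertical edges vanish as $L\to\infty$. The only pole inside is the simple pole at $s=i\pi/2$, where the residue of $e^{iks}/\cosh s$ is $e^{-k\pi/2}/(i\sinh(i\pi/2))=-ie^{-k\pi/2}$. This gives
\begin{equation*}
 (1+e^{-k\pi})I=2\pi e^{-k\pi/2},
\end{equation*}
so $I=\pi\sech(\pi k/2)$.

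An alternative route avoids the contour. Expand $1/\cosh s=2\sum_{n\ge0}(-1)^ne^{-(2n+1)|s|}$, integrate term by term to obtain Lorentzians, and identify the resulting series with the partial-fraction expansion of $\sech$. Justifying the interchange of sum and integral is the only care needed on that route.
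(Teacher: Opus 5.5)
Your proof is correct. The CDF and inverse-transform steps coincide with the paper's; the characteristic function is evaluated by a different route.

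The paper applies the substitution $z=e^{\pi\omega/2}$, the same one you used for the CDF, directly to the Fourier integral. This turns it into the Mellin-type integral $\frac{2}{\pi}\int_0^\infty z^{2it/\pi}(1+z^2)^{-1}\dd z$. After $w=z^2$, this is $\frac12 B(a/2,1-a/2)$ with $a=1+2it/\pi$, and Euler's reflection formula gives $1/\sin(\pi a/2)=1/\cos(it)=\sech t$.

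You instead rescale to the self-reciprocal pair $\int_{\R}e^{iks}\sech s\dd s=\pi\sech(\pi k/2)$. You prove it with the rectangle of height $\pi$, using $\cosh(s+i\pi)=-\cosh s$.

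The paper's route is shorter and reuses one substitution throughout. However, it leans on the beta--gamma identity at a complex parameter, which is legitimate since $\operatorname{Re}(a/2)=1/2\in(0,1)$. Yours is self-contained, needing only the residue theorem and the bound $|\cosh(\pm L+iy)|\ge\sinh L$ on the vertical edges. Your series alternative also works: the alternating partial sums satisfy $0\le S_N(s)\le 2e^{-|s|}$, so dominated convergence justifies the interchange.

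One slip: the residue should read $e^{-k\pi/2}/\sinh(i\pi/2)$ or $e^{-k\pi/2}/(i\sin(\pi/2))$. As written, $e^{-k\pi/2}/(i\sinh(i\pi/2))$ equals $-e^{-k\pi/2}$. The value $-ie^{-k\pi/2}$ that you actually use is correct.
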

\begin{proof}
Differentiating $F$ gives $p$, and $F$ has limits zero and one. In the Fourier
integral, substitute $z=e^{\pi\omega/2}$ to obtain
\[
 \begin{aligned}
 &\frac2\pi\int_0^\infty\frac{z^{2it/\pi}}{1+z^2}\dd z\\
 &\quad=\frac1{\sin[\pi(1+2it/\pi)/2]}=\sech t.
 \end{aligned}
\]
The integral identity follows from the beta integral, whose parameter here has
real part one. Inversion of $F$ gives the sampler.
\end{proof}

\begin{theorem}[Variable-width completion]
\label{thm:e4-completion}
Assumption~\ref{post:e4-model} reproduces~\eqref{eq:e4-target} for all positive
$g,h$ and planar axes. An independently prepared input $\rho$, uncorrelated with
its controller, experiences
\begin{equation}
 \begin{aligned}
 \mathcal I_a^{g,x}(\rho)&=\Tr[E_a^g(x)\rho]\Pi_a(x),\\
 E_a^g(x)&=\frac{I+a\sech g\,n_x\cdot\sigma}{2}.
 \end{aligned}
 \label{eq:e4-instrument}
\end{equation}
With fresh local controllers, the same instrument describes every subsequent
branch of a finite adaptive protocol composed of these local operations and
classical outcome-dependent choices after the paired source use.
\end{theorem}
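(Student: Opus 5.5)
The plan has three parts: the singlet correlation averaged over the controller, the instrument for an uncorrelated input, and induction over adaptive histories.

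\emph{Source correlation.} For the singlet with planar projective axes $x'$ and $y'$, the joint law is $P(a,b\mid x',y')=[1-ab\cos(x'-y')]/4$. Substituting $x'=x+g\Omega$ and $y'=y-h\Omega$ gives $x'-y'=(x-y)+(g+h)\Omega$. Averaging over $\Omega$ with Lemma~\ref{lem:e4-transform}, and using that $\Omega$ is symmetric so the sine term vanishes, gives
\[
\E\cos\bigl((x-y)+(g+h)\Omega\bigr)=\cos(x-y)\,\E\cos\bigl((g+h)\Omega\bigr)=\cos(x-y)\sech(g+h).
\]
This is exactly Eq.~\eqref{eq:e4-target}. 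Independence of $\Omega$ from the axes and widths is what makes the averaging legitimate.

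\emph{Instrument for an uncorrelated input.} Conditional on $\Omega=\omega$, the local operation measures $\Pi_a(x+g\omega)$ and re-prepares the fixed state $\Pi_a(x)$. The map is therefore $\rho\mapsto\Tr[\Pi_a(x+g\omega)\rho]\,\Pi_a(x)$. Because the output state does not depend on $\omega$ and $\rho$ is uncorrelated with $\omega$, averaging passes through to the effect. One needs
\[
\E\,\Pi_a(x+g\Omega)=\frac{I+a\,\E[n_{x+g\Omega}]\cdot\sigma}{2}.
\]
Here $\E n_{x+g\Omega}=(\E\sin(x+g\Omega),0,\E\cos(x+g\Omega))=\sech g\,n_x$, again using symmetry and the characteristic function. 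This yields $E_a^g(x)$ and Eq.~\eqref{eq:e4-instrument}.

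\emph{Adaptive histories.} The subtle step is to show that the instrument description survives adaptive composition with fresh controllers. I would argue by induction on the number of local operations after the paired use. The claim is that the joint law of the classical record together with the current quantum state of each wing equals the law generated by composing the nominal instruments $\mathcal I^{g,x}_a$. At each step, the next setting and width are measurable functions of the record. The fresh controller is drawn independently of the incoming spin, of the whole earlier record, and hence of the chosen setting. Conditioning on the record and applying the single-step computation therefore gives the nominal instrument branchwise. Re-preparation in a nominal eigenstate removes any dependence of the outgoing state on the controller. As a result, no correlation with the controller is carried forward.

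One point needs care. After the paired source use, the post-measurement state of each wing is the nominal eigenstate, which is independent of $\Omega$ conditional on the record. However, the record itself (outcomes $a,b$) is correlated with the initial $\Omega$. Since subsequent controllers are fresh, this correlation never re-enters, because later operations depend on the initial $\Omega$ only through the record.

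The main obstacle is stating this conditional-independence bookkeeping precisely for general measurable adaptive policies. I would handle it by writing the history law as an iterated kernel. At each stage I would integrate the fresh controller first, holding everything else fixed, via Fubini, and then invoke the one-step lemma.
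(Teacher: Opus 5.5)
Your proposal is correct and follows the paper's proof essentially step for step. You average the shifted singlet correlation $-\cos[x-y+(g+h)\omega]$ with the $\sech$ transform, average the shifted projector on an uncorrelated input and use nominal re-preparation to obtain $E_a^g(x)$, and then induct over the finite protocol tree using freshness of the later controllers. Your explicit observation that the initial $\Omega$ affects later stages only through the classical record is bookkeeping the paper leaves implicit, and it is the right way to make the induction precise; it holds because the re-prepared states are fixed by that record.
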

\begin{proof}
At fixed $\omega$, the singlet correlation is
$-\cos[x-y+(g+h)\omega]$, with uniform single-wing marginals. Averaging by
Lemma~\ref{lem:e4-transform} proves~\eqref{eq:e4-target}. Averaging a shifted
projector on an independent input removes its sine term and multiplies its
cosine term by $\sech g$. Nominal re-preparation therefore gives
\eqref{eq:e4-instrument}. Freshness makes this same conditional map applicable
after every earlier history, even when that history selects the next width and
axis. Induction on the finite protocol tree proves the joint-history statement.
The outcome-summed maps are trace preserving; summing complete future branches
therefore preserves earlier operational records.
\end{proof}

Independent wing controllers at the \emph{source} instead produce visibility
$\sech g\sech h$. Since
\[
 \sech(g+h)=\frac{\sech g\sech h}{1+\tanh g\tanh h},
\]
this change is observable for positive widths. Shared noise is sufficient for
this construction, but the construction does not establish its necessity or
uniqueness among operational completions. Record preservation follows from the
trace-preserving quantum instruments. Figure~\ref{fig:e4-source} compares the
two source visibilities with their sampled estimates.

\begin{figure}[tbp]
 \centering
 \includegraphics[width=\columnwidth]{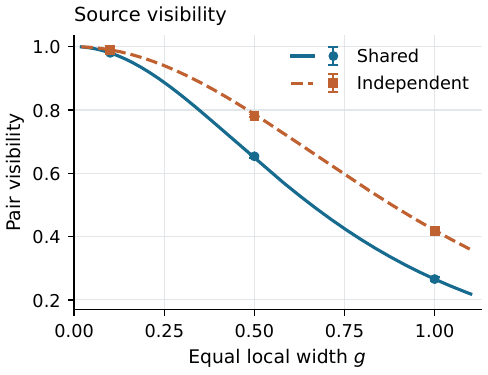}
 \caption{Equal-width source visibility for shared and independent wing
 controllers. The solid blue curve and circles represent a shared controller;
 the dashed orange curve and squares represent independent controllers.
 Curves give analytic predictions, and markers show sampled outcomes with
 pointwise 95\% Wilson intervals, sometimes smaller than their markers. The independent-controller visibility differs from the target
 law~\eqref{eq:e4-target}.}
 \label{fig:e4-source}
\end{figure}

Equality of the one-use prepared-input maps does not imply equality under
reuse. To distinguish the two extensions, consider two nominal $Z$ measurements
starting in $|0\rangle$. The null hypothesis is equality of the final plus
probabilities under fresh and retained controllers.

\begin{theorem}[Reset-sensitive endpoint probability]
\label{thm:e4-reset}
Starting from $|0\rangle$, with nominal $Z$ re-preparation after each outcome,
\begin{equation}
 P_{\rm fresh}(s_2=+)=\frac{1+\sech g\sech h}{2},
 \label{eq:e4-fresh}
\end{equation}
\begin{equation}
 \begin{aligned}
 P_{\rm retained}(s_2=+)
 &=\frac12+\frac14\bigl[\sech(g-h)\\
 &\hspace{3.3em}+\sech(g+h)\bigr].
 \end{aligned}
 \label{eq:e4-retained}
\end{equation}
\end{theorem}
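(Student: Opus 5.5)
The plan is to condition on the controller values, compute the endpoint probability as an explicit trigonometric expression, and then average using the characteristic function $\E e^{it\Omega}=\sech t$ from Lemma~\ref{lem:e4-transform}. I read the setting as follows: the two uses of the nominal $Z$ axis ($x=0$, so $n_0=(0,0,1)$) have local widths $g$ and $h$. Under Assumption~\ref{post:e4-model}, these detectors measure projectively along $g\Omega_1$ and $h\Omega_2$ respectively. In the fresh extension $\Omega_1,\Omega_2$ are independent copies of~\eqref{eq:e4-density}; in the retained extension $\Omega_2=\Omega_1=\Omega$. If the second use is instead placed on the opposite wing, so that the rotation is $-h\Omega$, nothing changes, because every quantity below is even in the angles.

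\textbf{Step 1 (fixed controllers).} At fixed $\omega_1$, the first projective measurement along axis angle $g\omega_1$ in the $xz$ plane acts on $|0\rangle$. It gives $P(s_1=\pm\mid\omega_1)=\tfrac12(1\pm\cos g\omega_1)$. The detector then re-prepares the \emph{nominal} eigenstate $\Pi_{s_1}(0)$, that is, $|0\rangle$ or $|1\rangle$, whose Bloch vector is $s_1\hat z$. The second measurement along angle $h\omega_2$ then gives $P(s_2=+\mid s_1,\omega_2)=\tfrac12(1+s_1\cos h\omega_2)$. Summing over $s_1$ yields
\[
 P(s_2=+\mid\omega_1,\omega_2)=\tfrac12+\tfrac12\cos(g\omega_1)\cos(h\omega_2).
\]
Nominal re-preparation is essential here: the post-measurement state does not depend on $\omega_1$, so the only memory between the two uses is carried by the classical controller.

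\textbf{Step 2 (averaging).} In the fresh case, independence factorizes the expectation. Because $\Omega$ is symmetric, $\E\cos(t\Omega)=\E e^{it\Omega}=\sech t$, which gives \eqref{eq:e4-fresh}. In the retained case, the product-to-sum identity gives
\[
 \cos(g\Omega)\cos(h\Omega)=\tfrac12[\cos((g-h)\Omega)+\cos((g+h)\Omega)].
\]
Taking expectations with $\sech$ then gives \eqref{eq:e4-retained}. Both expectations are of bounded functions, so Fubini justifies the exchange.

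\textbf{Main obstacle.} The computation itself is routine. The delicate part is fixing the interface exactly as Assumption~\ref{post:e4-model} specifies. Three points must be stated explicitly: the retained controller enters both rotations with the same value (the sign convention is irrelevant by evenness); the re-prepared state is the nominal one rather than the rotated eigenstate; and no other shared randomness or source correlation is present in this single-wing sequence. If the rotated eigenstate were re-prepared instead, the fixed-$\omega$ expression would involve $\cos((h\omega_2-g\omega_1))$ and the formulas would change. I would therefore state this reading explicitly before carrying out the two averages.
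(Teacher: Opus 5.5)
Your proposal is correct and takes essentially the same route as the paper. You condition on the controller values to get $P(s_2=+\mid\omega_1,\omega_2)=\tfrac12+\tfrac12\cos(g\omega_1)\cos(h\omega_2)$, which the paper expresses through conditional means as $\E s_2=\E[\cos(g\Omega_1)\cos(h\Omega_2)]$, and then average using independence in the fresh case or the product-to-sum identity in the retained case, together with $\E e^{it\Omega}=\sech t$.
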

\begin{proof}
The first outcome has conditional mean $\cos(g\Omega_1)$, and the second has
conditional mean $s_1\cos(h\Omega_2)$. Consequently
$\E s_2=\E[\cos(g\Omega_1)\cos(h\Omega_2)]$.
Independence gives~\eqref{eq:e4-fresh}; a common controller gives
\eqref{eq:e4-retained} by the cosine product identity and
Lemma~\ref{lem:e4-transform}.
\end{proof}

At $g=h=0.5$ the theoretical difference is $0.018789701933$.
Thus the two extensions have distinct temporal predictions despite equality of
their one-use prepared-input maps. Simulations comprised 200,000 independent
two-stage trials per model; Table~\ref{tab:e4-reset} gives the final-outcome
counts, and Figure~\ref{fig:e4-reset} displays the comparison. The estimated
retained-minus-fresh difference is $0.018595$, with independent-sample
normal 95\% interval $[0.016762,0.020428]$. These synthetic data quantify the
distinguishability of the specified mechanisms; they are not measurements of a
physical detector.

\begin{table}[tbp]
\centering
\small
\begin{tabular}{@{}lrr@{}}
\toprule
Quantity & Fresh & Retained\\
\midrule
Plus count & $178750$ & $182469$\\
Independent trials & $200000$ & $200000$\\
Estimate & $0.893750$ & $0.912345$\\
95\% Wilson lower & $0.892392$ & $0.911098$\\
95\% Wilson upper & $0.895093$ & $0.913576$\\
Theory & $0.8932238665$ & $0.9120135684$\\
\bottomrule
\end{tabular}
\caption{Two-stage reset comparison at $g=h=0.5$, starting in $|0\rangle$.
The intervals are pointwise 95\% Wilson intervals from independent simulated
trials.}
\label{tab:e4-reset}
\end{table}

\begin{figure}[tbp]
 \centering
 \includegraphics[width=\columnwidth]{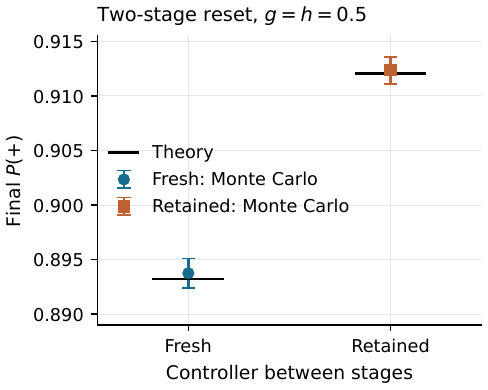}
 \caption{Two-stage reset witness at $g=h=0.5$, starting in $|0\rangle$.
 Blue circles (fresh controller) and orange squares (retained controller),
 with pointwise 95\% Wilson intervals, summarize 200,000 independent
 simulated trials per prescription. Black horizontal bars give the
 exact predictions~\eqref{eq:e4-fresh} and~\eqref{eq:e4-retained}.}
 \label{fig:e4-reset}
\end{figure}

For a retained controller and a specified history $s_1,\ldots,s_T$, let the
nominal axes be $x_j=x_j(s_1,\ldots,s_{j-1})$, with $s_0=+1$ and initial
state $\Pi_+(x_0)$. Writing $d_j=x_j-x_{j-1}$, the conditional history weight is
\begin{equation}
 \prod_{j=1}^T
 \frac{1+s_{j-1}s_j\cos(d_j+g_j\Omega)}2.
 \label{eq:e4-history}
\end{equation}
Expanding each cosine into exponentials and replacing $e^{it\Omega}$ by
$\sech t$ integrates this history exactly. For fresh controllers, its average
instead factorizes, replacing each cosine by
$\sech g_j\cos(x_j-x_{j-1})$.

The adaptive comparison includes every three-stage deterministic policy with widths
$(0.25,0.5,0.75)$, first axis zero, second axis chosen from $\{0,\pi/3\}$ by the
first outcome, and third axis from $\{0,\pi/2\}$ by the first two outcomes.
There are $2^{2+4}=64$ policies and eight histories each. Complete-history total
variation between fresh and retained models ranges from $0.00526175$ to
$0.07597691$, as shown in Figure~\ref{fig:e4-adaptive}. Eight fixed
representative policies were also sampled with
100,000 independent trials per policy and model. Fresh analytic probabilities
agreed with an independent density-matrix implementation
of~\eqref{eq:e4-instrument} to $1.11\times10^{-16}$. Retained Fourier weights
agreed with ordinary numerical integration for one complete policy to the same
reported residual. These are floating-point consistency diagnostics, not
rigorous integration-error bounds.

\begin{figure}[tbp]
 \centering
 \includegraphics[width=\columnwidth]{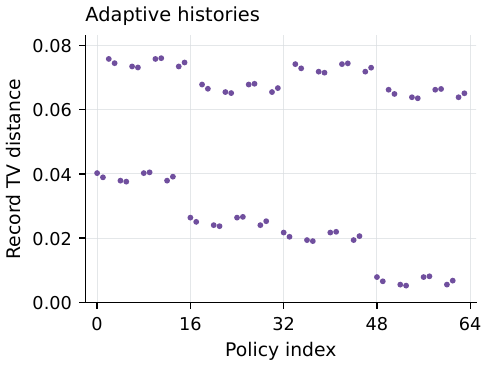}
 \caption{Exact complete-history total variation between fresh and retained
 controllers for all 64 specified three-stage adaptive policies. Each point
 compares the two eight-history distributions for one policy; it is an
 analytic evaluation, with no sampling error bars.}
 \label{fig:e4-adaptive}
\end{figure}

Interventions and full temporal records are central to operational descriptions
of quantum memory~\cite{pollock,taranto}. The endpoint witness above resolves
two reset prescriptions; it neither certifies a general environment dimension
nor implies backwards signalling.

\subsection{Finite shared resources}

\begin{postulate}[Counted resource and width-local access]
\label{post:e4-resource}
A bipartite preparation $\sigma_{AB}$ is fixed independently of both widths.
Each observable depends on its own width and local controls only. All correlated
controller and environment registers count towards the shared resource.
Initially independent local ancillas and private randomness may be absorbed into
effective local observables. Communication and width-dependent joint
preparations are excluded from this resource statement.
\end{postulate}

\begin{theorem}[Variable-width rank]
\label{thm:e4-rank}
For distinct positive $g_1,\ldots,g_N$ and distinct positive
$h_1,\ldots,h_N$, the matrix $K_{ij}=\sech(g_i+h_j)$ has rank $N$.
Under Assumption~\ref{post:e4-resource}, correlations generated from a fixed
finite-dimensional total state with local dimensions $d_A,d_B$ have matrix rank
at most $\min(d_A^2,d_B^2)$. No fixed finite total dimension therefore reproduces
\eqref{eq:e4-target} on the entire continuum of widths.
\end{theorem}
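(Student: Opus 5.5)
The plan has two independent parts: an algebraic rank computation for $K$, and a dimension count for correlations generated by a fixed finite state. The final claim follows by comparing them.

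\textbf{Rank of $K$.} I will reduce $K$ to a Cauchy matrix by diagonal rescaling. Set $u_i=e^{2g_i}$ and $w_j=-e^{-2h_j}$. Then
\[
 \sech(g_i+h_j)=\frac{2e^{g_i}e^{h_j}}{e^{2g_i}e^{2h_j}+1}
 =\frac{2e^{g_i}e^{-h_j}}{u_i-w_j},
\]
because $e^{2g_i}e^{2h_j}+1=e^{2h_j}(u_i-w_j)$. Hence $K=D_1CD_2$, where $D_1=\operatorname{diag}(2e^{g_i})$ and $D_2=\operatorname{diag}(e^{-h_j})$ are invertible and $C_{ij}=1/(u_i-w_j)$. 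Since $u_i>1>0>w_j$, no denominator vanishes. The $u_i$ are pairwise distinct because the $g_i$ are, and the $w_j$ are pairwise distinct because the $h_j$ are. The Cauchy determinant
\[
 \det C=\frac{\prod_{i<j}(u_j-u_i)(w_i-w_j)}{\prod_{i,j}(u_i-w_j)}
\]
is therefore nonzero, so $\operatorname{rank}K=N$.

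\textbf{Dimension bound.} Under Assumption~\ref{post:e4-resource}, the correlated registers and shared controller are absorbed into a fixed state $\sigma_{AB}$ on $\mathbb C^{d_A}\otimes\mathbb C^{d_B}$. Initially independent local ancillas and private randomness are absorbed into effective local POVMs by tracing them out against their fixed local states. Alice's binary measurement at width $g_i$ and fixed axis $x$ then gives a Hermitian observable $O_i^A=E_+-E_-$ acting on $\mathbb C^{d_A}$, and it depends only on $g_i$ and local controls. Likewise Bob has $O_j^B$ depending only on $h_j$. The correlator is
\[
 E_{ij}=\Tr\!\left[\sigma_{AB}\,(O_i^A\otimes O_j^B)\right],
\]
which is a real bilinear form in $(O_i^A,O_j^B)$. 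Choose a basis $\{X_k\}$ of the $d_A^2$-dimensional real space of Hermitian operators on $\mathbb C^{d_A}$ and write $O_i^A=\sum_k\alpha_{ik}X_k$. Then $E_{ij}=\sum_k\alpha_{ik}\beta_{kj}$ with $\beta_{kj}=\Tr[\sigma(X_k\otimes O_j^B)]$. This factorization through $\R^{d_A^2}$ gives $\operatorname{rank}E\le d_A^2$, and the symmetric argument on Bob's side gives $\operatorname{rank}E\le d_B^2$.

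\textbf{Conclusion.} Take $x=y=0$. By \eqref{eq:e4-target}, the correlator $\sum_{a,b}ab\,P_0$ equals $-\sech(g_i+h_j)$, so $E=-K$, which has rank $N$ by the first part. Choosing $N>\min(d_A^2,d_B^2)$ contradicts the second part. Hence no fixed finite total dimension reproduces the continuum family.

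The Cauchy step and the bilinear counting are routine. The point needing most care is the reduction to effective local observables. I would justify it through the assumption alone: communication and width-dependent joint preparations are excluded, so every width dependence sits in the local effects. Any classical shared register is just a block-diagonal part of $\sigma_{AB}$ and therefore counts toward $d_A,d_B$.
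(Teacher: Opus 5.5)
Your proof is correct and follows the paper's: the same diagonal rescaling to a Cauchy matrix (your $w_j=-e^{-2h_j}$ is the paper's $-v_j$). The rank bound is the same $d_A^2$- or $d_B^2$-dimensional factorization of the correlator, which the paper obtains from an operator Schmidt decomposition of $\sigma_{AB}$ and you obtain from a Hermitian basis for Alice's observables; you also spell out the final step ($x=y$, $E=-K$, take $N>\min(d_A^2,d_B^2)$) that the paper leaves implicit.
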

\begin{proof}
Put $u_i=e^{2g_i}$ and $v_j=e^{-2h_j}$. Then
\[
 K_{ij}=2e^{g_i-h_j}(u_i+v_j)^{-1}.
\]
The diagonal prefactors are nonzero and the Cauchy determinant satisfies
\[
 \begin{aligned}
 &\det[(u_i+v_j)^{-1}]\\
 &\quad=\frac{\prod_{i<j}(u_j-u_i)(v_j-v_i)}
                  {\prod_{i,j}(u_i+v_j)}\ne0.
 \end{aligned}
\]
For the resource bound, write an operator Schmidt decomposition
$\sigma_{AB}=\sum_{k=1}^r c_kX_k\otimes Y_k$, with
$r\leq\min(d_A^2,d_B^2)$. The correlation matrix
$-\Tr[\sigma_{AB}(A(g_i)\otimes B(h_j))]$ is a sum of $r$ outer products.
Its rank is therefore at most $r$.
\end{proof}

This counts the classical controller, and does not require an unbounded
\emph{spin} dimension: Theorem~\ref{thm:e4-completion} uses qubits and a continuous
classical resource. Treating unlimited shared randomness as free invalidates
the total-dimension conclusion. Such resource-accounting distinctions are
essential in dimension-witness work~\cite{sikora}. For a purely classical shared
label with $M$ values and local product responses, the rank is at most $M$.
For this particular singlet construction with $M$ discrete shifted-axis labels,
cosine addition gives rank at most $2M$.

\begin{corollary}[Necessary approximation error]
\label{cor:e4-rank-error}
If $\widetilde K$ has rank at most $r<N$ and
$\max_{ij}|K_{ij}-\widetilde K_{ij}|\leq\epsilon$, then
\begin{equation}
 \epsilon\geq\sigma_{r+1}(K)/N.
 \label{eq:e4-rank-error}
\end{equation}
\end{corollary}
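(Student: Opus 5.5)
The plan is to pass from the entrywise error to a spectral-norm error and then apply the Eckart--Young--Mirsky (Weyl) inequality for singular values. Set $E=K-\widetilde K$, so that $\max_{ij}|E_{ij}|\le\epsilon$ by hypothesis. Here $K$ is the $N\times N$ matrix $K_{ij}=\sech(g_i+h_j)$ of Theorem~\ref{thm:e4-rank}, and $\widetilde K$ is any matrix of rank at most $r<N$, such as one generated by a bounded resource.

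\textbf{Step 1 (singular-value perturbation).} Weyl's inequality for singular values gives $\sigma_{i+j-1}(A+B)\le\sigma_i(A)+\sigma_j(B)$. Take $A=\widetilde K$, $B=E$, $i=r+1$ and $j=1$. Since $\operatorname{rank}\widetilde K\le r$, we have $\sigma_{r+1}(\widetilde K)=0$, and therefore
\[
\sigma_{r+1}(K)\le\sigma_1(E)=\|E\|_2 .
\]
Equivalently, Eckart--Young states that the spectral distance from $K$ to the set of matrices of rank at most $r$ equals $\sigma_{r+1}(K)$.

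\textbf{Step 2 (norm comparison).} For an $N\times N$ matrix, bound the operator norm by the entrywise maximum. The quickest route uses the Frobenius norm:
\[
\|E\|_2\le\|E\|_F\le\sqrt{N^2\epsilon^2}=N\epsilon .
\]
Alternatively, $\|E\|_2\le\sqrt{\|E\|_1\|E\|_\infty}\le N\epsilon$, using the maximum column and row sums. Either way, $\sigma_{r+1}(K)\le N\epsilon$, which rearranges to Eq.~\eqref{eq:e4-rank-error}.

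\textbf{Step 3 (content of the bound).} By Theorem~\ref{thm:e4-rank}, $K$ has full rank $N$, so $\sigma_{r+1}(K)>0$ for every $r<N$. The bound is therefore a strictly positive lower bound on the entrywise error. Combined with the resource bound $r\le\min(d_A^2,d_B^2)$, or $r\le M$ for a classical label, this yields a quantitative error floor for any fixed finite resource evaluated on $N$ distinct widths.

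None of these steps is a real obstacle. The only point that needs care is the direction of the norm comparison between max-entry and operator norms, where the factor $N$ arises. It may also be worth checking whether the sharper constant $\sqrt{N}$ is unattainable in general; it is, since the all-$\epsilon$ matrix has operator norm exactly $N\epsilon$, so the factor $N$ is the natural one.
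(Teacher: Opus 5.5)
Your proof is correct and follows the paper's own argument: the Eckart--Young (equivalently Weyl) bound $\sigma_{r+1}(K)\le\|K-\widetilde K\|_2$ for any matrix of rank at most $r$, followed by $\|K-\widetilde K\|_2\le\|K-\widetilde K\|_F\le N\epsilon$. Your additional remarks are accurate but not needed for the corollary. They note that $\sigma_{r+1}(K)>0$ by Theorem~\ref{thm:e4-rank}, and that the constant $N$ in the max-entry-to-operator-norm comparison cannot be improved.
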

\begin{proof}
The spectral error of any rank-$r$ approximation is at least
$\sigma_{r+1}(K)$, while the Frobenius bound gives
\[
 \|K-\widetilde K\|_2\leq\|K-\widetilde K\|_F\leq N\epsilon.
\]
\end{proof}

This necessary bound need not be attainable by physical instruments. For
12 equally spaced widths on $[0.05,1.5]$, the kernel has exact rank 12 and
numerical rank nine under the standard double-precision tolerance. The rapidly
decreasing singular values make robust inference substantially weaker than the
exact-rank statement. Singular values below numerical resolution cannot certify
a dimension bound.

The exact obstruction nevertheless permits finite positive approximations on
bounded width intervals. Such approximations can be implemented by the same
qubit instruments with a discrete shared controller.

\begin{theorem}[Uniform finite-controller approximation]
\label{thm:e4-quadrature}
Fix $0\leq t\leq T$. Partition $[-L,L]$ into bins of width $\Delta$ and place each
bin's probability mass from~\eqref{eq:e4-density} at its midpoint. Place the two
tail masses at $-L$ and $+L$. The resulting positive normalized symmetric
distribution has cosine transform $\phi_M$ satisfying
\begin{equation}
 \begin{aligned}
 \sup_{0\leq t\leq T}|\phi_M(t)-\sech t|
 &\leq\min\{2,B_1,B_2\},\\
 B_1&=\frac{T\Delta}{2}+2\delta_L,\\
 B_2&=\frac{(T+T^2)\Delta^2}{8}+2\delta_L,
 \end{aligned}
 \label{eq:e4-finite-bound}
\end{equation}
where $\delta_L=(4/\pi)\arctan e^{-\pi L/2}$ is the total tail mass.
At fixed $T$, an attainable label count is
$M=O(\epsilon^{-1/2}\log(1/\epsilon))$ for uniform error $\epsilon$.
\end{theorem}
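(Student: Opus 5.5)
The plan is to compare the discrete cosine transform with the true one, $\sech t=\E\cos(t\Omega)$ from Lemma~\ref{lem:e4-transform}, bin by bin. Write $\phi_M(t)=\sum_k w_k\cos(t\omega_k)$, where $w_k$ is the mass of bin $I_k$ and $\omega_k$ its midpoint, and the two tail atoms carry mass $\delta_L/2$ each at $\pm L$. The tail mass is $\delta_L=2[1-F(L)]$; since $\arctan z+\arctan(1/z)=\pi/2$, this equals $(4/\pi)\arctan e^{-\pi L/2}$. The construction is positive and normalized because it only relocates mass, and it is symmetric because $p$ is even and the partition is symmetric. Symmetry also means only the cosine transform is relevant. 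The trivial bound $2$ holds because both transforms lie in $[-1,1]$.

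Next, split the error into the tail and bulk contributions,
\[
\phi_M(t)-\sech t=\sum_k\int_{I_k}p(\omega)[\cos(t\omega_k)-\cos(t\omega)]\dd\omega+\int_{|\omega|>L}p(\omega)[\cos(tL)-\cos(t\omega)]\dd\omega .
\]
The tail term is bounded by $2\delta_L$, since each integrand is at most $2$ in absolute value.

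For $B_1$, use the Lipschitz bound $|\cos(t\omega_k)-\cos(t\omega)|\le t|\omega-\omega_k|\le T\Delta/2$ on each bin. Summing with total bulk mass at most one gives $T\Delta/2$.

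For $B_2$, the idea is midpoint-rule cancellation. Taylor-expand $G(\omega)=p(\omega)\cos(t\omega)$ about each midpoint and compare $\int_{I_k}G$ with $\cos(t\omega_k)\int_{I_k}p$. The first-order terms in $(\omega-\omega_k)$ cancel by symmetry of each bin about its midpoint. The remainder is controlled by second derivatives: $\int_{I_k}p(\omega)[\cos(t\omega)-\cos(t\omega_k)]\dd\omega$ equals $-t\sin(t\omega_k)\int_{I_k}p(\omega)(\omega-\omega_k)\dd\omega$ plus $O(t^2\Delta^2/8)\int_{I_k}p$. The first moment is in turn bounded by $\Delta^2/8$ times $\int_{I_k}|p'|$ (using $\int (\omega-\omega_k)\dd\omega=0$ and $|\omega-\omega_k|\le\Delta/2$, up to the constant, which I would sharpen by a Peano-kernel argument). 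Since $p$ is unimodal, $\int|p'|=2p(0)=1$. This produces the $T\Delta^2/8$ term from the first-order piece and the $T^2\Delta^2/8$ term from the curvature piece. Getting exactly the constant $1/8$ for both pieces is the main obstacle. I would handle it by writing the bin error as $\int_{I_k}(\omega-\omega_k)^2/2\cdot(\ldots)$ via the kernel $\frac12(\Delta/2-|\omega-\omega_k|)^2$ or similar, and bounding $t^2\cos$ by $T^2$ and $|p'|$ by its total variation $1$, together with $\sup|\cos''|\le 1$ against mass $1$.

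Finally, for the label count, choose $L=(2/\pi)\log(C/\epsilon)$. Since $\arctan z\le z$, this makes $\delta_L\le(4/\pi)e^{-\pi L/2}\lesssim\epsilon$. Then choose $\Delta\asymp\sqrt{\epsilon/(T+T^2)}$ so that $B_2\le\epsilon$. The number of labels is $M=2L/\Delta+2=O(\epsilon^{-1/2}\log(1/\epsilon))$ at fixed $T$.
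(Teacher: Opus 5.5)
Your proposal follows the paper's proof essentially step by step: the tail relocation costs $2\delta_L$, the Lipschitz bound gives $B_1$, and for $B_2$ one Taylor-expands the cosine (not $p\cos$) at each midpoint, so that the quadratic remainder gives $T^2\Delta^2/8$ and the signed first moment is controlled through $\int|p'|=2p(0)=1$. The constant you leave open is settled by the integration by parts the paper uses, $\int_{-h}^{h}u\,p(m+u)\dd u=\tfrac12\int_{-h}^{h}(h^2-u^2)\,p'(m+u)\dd u$ with $h=\Delta/2$, whose kernel is at most $\Delta^2/8$; the kernel $\tfrac12(h-|u|)^2$ you mention is instead the midpoint-rule kernel for $G''$, which would bring in $p''$ and not give the stated constant.
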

\begin{proof}
Each interior displacement is at most $\Delta/2$, and the cosine derivative has
magnitude at most $T$, giving the linear bound. Tail discrepancies contribute
at most $2\delta_L$. For the quadratic bound, write $u=\omega-m$ in a bin with
midpoint $m$ and half-width $h=\Delta/2$. Integration by parts yields
\[
 \begin{aligned}
 &\int_{-h}^h u p(m+u)\dd u\\
 &\quad=\frac12\int_{-h}^h(h^2-u^2)p'(m+u)\dd u.
 \end{aligned}
\]
Thus the sum of absolute signed bin moments is at most
$\Delta^2\int_{\mathbb R}|p'|/8=\Delta^2/8$, since the symmetric unimodal density
has $\int|p'|=2p(0)=1$. Taylor expansion of the cosine at each midpoint bounds
the summed linear terms by $T\Delta^2/8$ and the quadratic remainders by
$T^2\Delta^2/8$. Both cosine transforms lie between minus one and one, giving
the universal bound two. Choosing $L=O(\log(1/\epsilon))$ and
$\Delta=O(\sqrt\epsilon)$ proves the stated constructive scaling.
\end{proof}

This is an upper bound for an explicit positive approximation, without an
optimality claim. For the source law, visibility error $\epsilon$ gives per-outcome
probability error at most $\epsilon/4$ and total variation at most $\epsilon/2$.
Mathematical bounds assume exact bin masses; evaluated CDF differences incur
ordinary floating-point rounding.

Finite-controller approximations were evaluated with 8 through 1024 labels,
$L=8$, and 4001 points on $0\leq t\leq3$. At 1024 labels the maximum grid error was
$1.43921\times10^{-5}$, whereas the proven continuous-domain upper bound was
$3.76526\times10^{-4}$. Figure~\ref{fig:e4-finite} distinguishes the evaluated errors
from the uniform analytic bound.

\begin{figure}[tbp]
 \centering
 \includegraphics[width=\columnwidth]{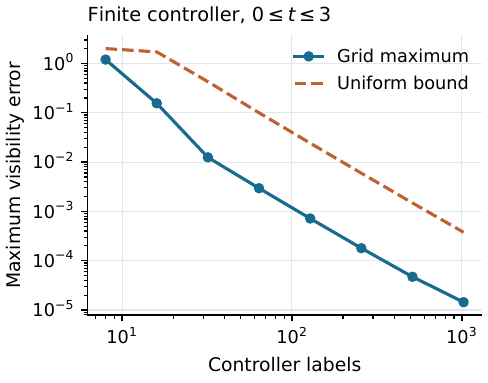}
 \caption{Positive finite-controller approximation on $0\leq t\leq3$.
 Blue circles joined by a solid line show the grid maximum evaluated at
 4001 points using $L=8$. The orange dashed curve is the continuous-domain bound~\eqref{eq:e4-finite-bound}; it controls all
 widths in the stated interval rather than only the evaluated grid.}
 \label{fig:e4-finite}
\end{figure}

The simulations comprise 3,600,000 independent trials and
8,800,000 detector outcomes. Source sampling comprised 16 cells of 100,000
trials: width pairs $(0.1,0.1)$, $(0.2,0.7)$, $(0.5,0.5)$ and $(1,1)$,
axis differences $0$ and $\pi/3$, and shared or independent wing controllers.
The largest absolute binomial standardized deviation from its analytic source
probability was 2.34421; for the sampled adaptive histories it was 2.63793.
The pointwise confidence intervals are descriptive and do not assert
simultaneous coverage over the parameter and policy comparisons. The
computational data include sampled controllers, outcomes and nominal settings.

The construction connects a variable-width source law to explicit temporal
predictions and shared-resource requirements. Optimal finite-resource
approximation, robust width-grid selection and a derivation from the original
retrocausal source dynamics remain open. In particular, the operational
construction does not determine the hidden source law from which the pair
probabilities originated.

\section{Discussion}\label{sec:discussion}
The results distinguish three requirements that coincide neither mathematically nor operationally: invariance of branch probabilities, invariance of conditional record laws, and realizability of the correction within the specified apparatus. Theorem~\ref{thm:conditional} connects the first two. Repeated passive access reveals higher conditional moments of an unresolved angle; a readable compensating controller exposes conditional information that disappears from its separate marginal. At a junction, the correction must also respect the available factorization. Under reuse, the retained controller changes the conditional instrument even when the isolated calibration is unchanged.

Each model admits a different resolution. Copying a protected bit adds no new observation of the hidden angle. The Bell-junction benchmark defines a normalized contextual law with an exactly characterized correction rank. The Gaussian family admits a sharp unrestricted law-change frontier and an exact classification of the protected parameter choice, while angular-only readout removes its radial information. The shared-controller construction supplies an operational realization of the isolated variable-width family and a positive finite-resource approximation. These resolutions impose different readout, interaction, or resource assumptions; they are not interchangeable implementations of the same hidden-history model.

For approximate composition, the relevant control is on conditional kernels with memory retained, rather than on isolated marginals. The following standard estimate makes this distinction explicit.
\begin{proposition}[Conditional error accumulation]\label{prop:accumulation}
Consider two finite sequential record laws with the same initial preparation. Suppose their normalized conditional next-record kernels differ by at most $\delta_j$ in total variation at stage $j$, uniformly over the relevant earlier histories. Their complete $T$-stage record distributions then satisfy
\begin{equation}
 \TV(P_{1:T},\widetilde P_{1:T})
 \leq \min\!\left\{1,\sum_{j=1}^T\delta_j\right\}.
\end{equation}
\end{proposition}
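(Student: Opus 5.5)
The proof is a hybrid (telescoping) argument over stages, with a coupling as the cleanest way to present it. Write $P_{1:T}(\dd e_1,\ldots,\dd e_T)=\mu(\dd e_1)K_2(\dd e_2\mid e_1)\cdots K_T(\dd e_T\mid e_{1:T-1})$, and similarly for $\widetilde P$ with kernels $\widetilde K_j$. The common initial preparation is handled by letting the stage-one kernels $K_1,\widetilde K_1$ differ by at most $\delta_1$ (possibly zero). The trivial bound $\TV\le1$ gives the minimum with one, so it suffices to prove the additive bound $\sum_j\delta_j$.

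For each $0\le j\le T$, define the hybrid law $H^{(j)}$ that uses the kernels $K_1,\ldots,K_j$ of the first model for stages $1,\ldots,j$ and the kernels $\widetilde K_{j+1},\ldots,\widetilde K_T$ of the second model thereafter. Then $H^{(T)}=P_{1:T}$ and $H^{(0)}=\widetilde P_{1:T}$. The triangle inequality gives
\begin{equation}
\TV(P_{1:T},\widetilde P_{1:T})\le\sum_{j=1}^T\TV(H^{(j)},H^{(j-1)}).
\end{equation}

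The laws $H^{(j)}$ and $H^{(j-1)}$ share the same marginal on $e_{1:j-1}$, because both use $K_1,\ldots,K_{j-1}$. They also apply the same later kernels $\widetilde K_{j+1},\ldots,\widetilde K_T$. The remaining steps bound each summand by $\delta_j$.

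\begin{enumerate}
\item Later kernels cannot increase total variation. Applying a common Markov kernel to two measures (the data-processing inequality) gives $\TV(H^{(j)},H^{(j-1)})=\TV(H^{(j)}_{1:j},H^{(j-1)}_{1:j})$.
\item The two truncated laws have the form $\nu(\dd e_{1:j-1})K_j(\cdot\mid e_{1:j-1})$ versus $\nu(\dd e_{1:j-1})\widetilde K_j(\cdot\mid e_{1:j-1})$ with a common $\nu$. Exactly as in the disintegration identity of Theorem~\ref{thm:conditional}, their total variation is at most the $\nu$-average of the conditional total variations. For finite records this is an equality with a sum in place of the integral, and the same identity holds measurably for general records.
\item That average is at most $\delta_j$ by the uniform hypothesis.
\end{enumerate}

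Summing over $j$ and taking the minimum with one gives the claim.

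The only delicate point is what "uniformly over the relevant earlier histories" must cover. The history measure $\nu$ in step 2 is generated by the first model's kernels, so the hypothesis must hold on histories that are reachable under $P$ (or under both models), not merely under $\widetilde P$. For finite records this is automatic if the bound holds for all histories. Otherwise I would state it $\nu$-almost surely. The monotonicity in step 1 should be recorded explicitly as the data-processing inequality for total variation under a common Markov kernel.
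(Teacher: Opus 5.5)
Your proposal is correct and follows the same route as the paper. Both replace one kernel at a time (a telescoping hybrid argument), integrate out the unchanged normalized later kernels, bound the replaced stage-$j$ kernel's contribution by $\delta_j$, sum these bounds, and cap the result with $\TV\le1$. Your explicit data-processing step, the appeal to the disintegration identity of Theorem~\ref{thm:conditional}, and your remark that the uniform bound is needed only on histories reachable under the model supplying the first $j-1$ kernels fill in details the paper leaves implicit.
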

\begin{proof}
Insert and subtract one kernel at a time in the products of conditional kernels. Integration over unchanged normalized future kernels leaves the variation due to the replaced kernel, bounded by $\delta_j$. Sum these bounds and use $\TV\leq1$.
\end{proof}
The premise cannot be inferred from closeness of single-time distributions. The repeated-readout and retained-controller examples give explicit distinctions between these levels of description. Both reset models in Sec.~\ref{sec:memory} are causal quantum constructions and preserve earlier records; their separation concerns later temporal predictions, not backwards signalling.

Several questions remain. The bandwidth-dependent lower bound need not be sharp, and an optimal tradeoff should identify an attaining response family or a tighter regularity constraint. At the junction, a reusable physical carrier may admit correction rules outside the product-sum representation considered here; approximate resource bounds would be more informative than exact continuum rank alone. For the Gaussian controller, the boundary ensemble and coupling should ultimately be derived from dynamics that also includes every accessible pointer and reset record. Existing phase-space approaches already incorporate measurement dynamics~\cite{drummond}; comparison with such theories requires multitime record laws and intervention rules. Finally, the discrete shared-controller construction gives an attainable approximation rate, but its optimal resource scaling for adaptive instruments is undetermined.

The source-specific results use established ingredients, including antipodal symmetry~\cite{almada}, grouped-sample identifiability~\cite{vandermeulen}, source independence~\cite{branciard}, and operational descriptions of memory~\cite{pollock,taranto}. Their role here is to expose the assumptions needed to extend a protected isolated record. The obstruction statements apply to the specified apparatus classes, while the positive operational construction reproduces an outcome family without identifying its hidden-angle mechanism. This separation is necessary when assessing whether a proposed retrocausal apparatus remains consistent under richer observations and composition.

\section*{Data and code availability}
The simulation code, parameter specifications, numerical summaries, and trial-level records are provided in the accompanying reproducibility archive. Appendix~\ref{app:methods} describes the sampling and verification procedures. All figures are generated from the analytical expressions and simulation records specified in this article.
\paragraph{Author contributions and use of AI tools.}
The author used OpenAI's ChatGPT to assist with the implementation of numerical verification code and the preparation of the manuscript. The author takes responsibility for the final arguments, computations, interpretations, and conclusions.

\clearpage
\onecolumn
\appendix
\section{Numerical methods and reproducibility}\label{app:methods}
\subsection{Sampling and uncertainty}
The numerical study comprises 8,300,000 independent simulated histories. The history, rather than each component outcome, is the independent sampling unit. The repeated-readout experiment uses 18 contexts with $10^5$ histories each, giving 1,800,000 histories and 5,400,000 fresh detector outcomes. Copied records reuse these outcomes and do not increase the independent sample count. The network experiment uses three policies under two laws with $10^5$ histories per group, giving 600,000 histories. The Gaussian-controller experiment uses 11 configurations and two policies: the baseline has 150,000 histories per policy and each other configuration has 100,000, giving 2,300,000 histories. The variable-width study comprises 1,600,000 source trials, 400,000 two-stage trials, and 1,600,000 adaptive trials, giving 3,600,000 histories and 8,800,000 detector outcomes.

Continuous latent variables are sampled from the specified distributions, independently of the quadrature grids used for analytical comparisons. The repeated-readout sampler draws a wrapped-Cauchy candidate and accepts according to the remaining positive source factor. The network sampler draws conditional source angles by rejection after sampling the branch label. The Gaussian sampler draws controller coordinates from the conditional radial distribution, including the stated finite cutoff where applicable. The shared-controller sampler uses the inverse distribution function in Sec.~\ref{sec:memory}. Each experiment retains the sampled latent variables and outcomes used to compute its summary statistics.

Reported 95\% intervals are pointwise unless explicitly described as simultaneous. For a Bernoulli count $k$ from $n$ independent histories, the Wilson interval has centre $c$ and half-width $w$ given by
\begin{equation}
\begin{aligned}
 c&=\frac{\widehat p+z^2/(2n)}{1+z^2/n},\\
 w&=\frac{z\sqrt{\widehat p(1-\widehat p)/n+z^2/(4n^2)}}{1+z^2/n},
\end{aligned}
\end{equation}
where $\widehat p=k/n$ and $z=\Phi^{-1}(0.975)$. In the repeated-readout experiment, exact Clopper--Pearson intervals additionally use Bonferroni adjustment over 90 event probabilities, with tail probability $0.05/(2\times90)$. The primary reset comparison uses the difference of two independent sample proportions and the corresponding normal-approximation standard error. The classifier comparison in Sec.~\ref{sec:controller} is treated analogously. Overlapping event counts within one history are not assumed independent when assigning simultaneous coverage.

Some pointwise intervals exclude their analytical probabilities, as expected across many comparisons. The complete trial records underlying the reported sample counts are retained. The finite parameter sweeps and exact policy enumerations are distinct from the Monte Carlo sample totals and support claims only over their stated domains. Maxima on finite grids give lower bounds on continuum suprema; they are not certificates of global optimization.

\subsection{Deterministic checks and retained data}
Each deterministic calculation is compared with a separately evaluated expression. For repeated readout, integration on 32,768 and 65,536 angular points checks refinement, with maximum reported change $1.82\times10^{-14}$. For the Bell junction, positive-factor quadrature is compared with harmonic branch masses and independent quantum traces; the largest coarse-grid discrepancy is $1.77\times10^{-13}$. For the Gaussian controller, radial integration reproduces the analytical variation with discrepancy at most $1.11\times10^{-16}$ in the stated checks. For the memory model, finite Fourier expressions are compared with quadrature and with density-matrix instruments, with discrepancies of order $10^{-16}$. These values quantify the specified floating-point comparisons, rather than a common rigorous error bound. Exact rank claims follow from their analytical proofs, not from singular values below numerical resolution.

The implementation uses Python 3.12.14, NumPy 2.3.5, SciPy 1.17.0, and Matplotlib 3.10.8. The master seeds for the repeated-readout, network, controller, and memory experiments are, respectively, 2026091801, 2026091802, 2026091803, and 2026091704. Their independent streams and sampling counts are specified in the accompanying code. Each experiment provides a Python script, JSON results, CSV summaries, compressed NPZ trial arrays, and vector figures. Arrays can be read without executable object deserialization using \texttt{numpy.load} with \texttt{allow\_pickle=False}.

The accompanying verification program checks all 51 compressed-array archives, recounts reported events from retained outcomes, checks angle and outcome domains, verifies the copied-record identities, and reconstructs the Gaussian classifier directly from saved coordinates. The aggregate runner executes the four experiments in a new output directory. These numerical checks establish agreement between the retained records and the reported calculations; the theorem statements rely on the proofs given in the main text.

\end{document}